\newtheorem{theorem}{Theorem}
\newcommand{\mat}[1]{\boldsymbol{#1}}
\newcommand{\mX}{\mat{X}}
\newcommand{\vd}{\mat{d}}
\newcommand{\eps}{\varepsilon}
\title{
A spatial hypergraph model to smoothly interpolate between pairwise graphs and hypergraphs to study higher-order structures
}
\author[1]{Omar Eldaghar}
\author[2]{Yu Zhu}
\author[2]{David F. Gleich}
\affil[1]{Department of Mathematics, Purdue University}
\affil[2]{Department of Computer Science, Purdue University}
\date{}
\begin{document}

\maketitle



\begin{abstract}
We introduce a spatial graph and hypergraph model that smoothly interpolates between a graph with purely pairwise edges and a graph where all connections are in large hyperedges. The key component is a spatial clustering resolution parameter that varies between assigning all the vertices in a spatial region to individual clusters, resulting in the pairwise case, to assigning all the vertices in a spatial region to a single cluster, which results in the large hyperedge case. An important outcome of this model is that the spatial structure is invariant to the choice of hyperedges. Consequently, this model enables us to study clustering coefficients, graph diffusion, and epidemic spread and how their behavior changes as a function of the higher-order structure in the network with a fixed spatial substrate. We hope that our model will find future uses to distill or explain other behaviors in higher-order networks. 

\end{abstract}

\section{Introduction}

Parametric graph and hypergraph models have played an important role in network science and complex systems research --- from the Watts and Strogatz model of local clustering and graph diameter~\cite{watts1998collective}, to Kleinberg's study of small world routing~\cite{kleinberg2000small}, to the famous mixing parameter in the Lancichinetti, Fortunato, and Radicchi (LFR) stochastic block model~\cite{lancichinetti2008benchmark}. More generally, parametric variation in complex systems helps to identify different regimes of behavior and often phase transitions among them. Two examples of this include connectivity in a simple uniform random graph (Erd\H{o}s--R\'enyi--Gilbert model)~\cite{er1959, er1960, er1961, gilbert1959random} and synchronization in the Kuramoto oscillation~\cite{kuramoto1975self,Strogatz2000}. The key feature of these models is that they allow us to study a system as only one aspect varies. This is important because a single well-defined notion in the pairwise setting can have several orthogonal generalizations when moving to higher-order structures. Studying these generalizations in a controlled scenario enables us to appreciate and understand differences in the results. 

In this paper, we introduce a spatial model of graphs and hypergraphs that enables parametric variation from a purely pairwise edge behavior to the case where all nodes are involved in only hyperedges.
\footnote{This model was originally proposed in a conference paper by us~\cite{Eldaghar-2024-spatial-hypergraphs}. We discuss some slight differences between the conference model and the model in this paper shortly.} 
Crucially, this can be done while retaining the same spatial graph substrate. This enables us to study changes due to the hyperedge structure alone, which represents a unique capability among hypergraph models. 

The proposed model is simple and scalable. We randomly assign points to each vertex and also assign a number of neighbors, which is usually sampled from a random distribution. In the pairwise case, they directly connect to this number of nearest neighbors. However, in the general case, we run a clustering algorithm on the spatial connections among the points within this spatial region. The idea is that if points represent some latent similarity space, then nearby points will reflect similar features. Consequently, we use the spatial clusters in these regions to induce hyperedges. By varying the spatial cohesion, we can adjust the presence of pairwise edges compared with hyperedges. We discuss the model formally in Section~\ref{sec:model}. One challenge was scaling the spatial cohesion parameter to preserve scaling as we vary the dimension of the space from which points are drawn. A different feature of this model is that much of the connectivity among nodes is invariant to the choice of edges versus hyperedges, a result we formalize in Theorem~\ref{thm:connectivity}. 

We then explore how this model enables us to study the impact of higher-order structures in hypergraphs. Our first study is on clustering coefficients (Section~\ref{sec:clustering}). There are many different types of clustering coefficients in hypergraphs. We study a few of the simplest and most common including both unweighted and weighted clustering coefficients of the clique expansion as well as bipartite clustering coefficients of the node-hyperedge incidence matrix. In what is a small surprise, scaling from medium-sized to large-sized hyperedges reduces the global clustering coefficient of the projected graph (Figure~\ref{fig:cc-figure}). This occurs because adding new projected hyperedges can cause the number of length-2 paths, or wedges, to grow much faster than one might expect. This study also shows how the behavior of different clustering coefficients varies quite substantially as we interpolate from pairwise graphs to hypergraphs, which suggests that results about clustering coefficients in hypergraphs may not be robust to changing the type of clustering coefficient used. 

The next study is in terms of diffusion (Section~\ref{sec:diffusion}). We use a personalized PageRank diffusion in hypergraphs~\cite{Liu-2021-localhyper}. In this case, the behavior of the diffusion is governed by the spatial substrate underlying the network. Consequently, we see little difference in the behavior of the diffusion as we move from the pairwise case to the hypergraph case. The choice to include this study is meant to show that the model behaves as expected when higher-order structure may not impact the underlying physics.  

The final study is on epidemic spread (Section~\ref{sec:epidemics}). In this case, there is tremendous uncertainty about the impact of higher-order structure and, indeed, a variety of mixed results in the literature. For example, the addition of higher-order structures and group-level spreading can greatly alter the stability of epidemic thresholds by inducing a region of bistability~\cite{iacopini2019simplicial} not seen in traditional pairwise models.
Moreover, heterogeneity can also play competing roles in pairwise and higher-order structures~\cite{landry2020effect} to accelerate or inhibit spreading.
In this case, we wish to study epidemic spread in a model that attempts to mimic an airborne virus in the presence of ventilation. In this scenario, large hyperedge interactions require spaces with additional ventilation, which corresponds to a dilution effect of infectious aerosols. Of note, we find that the impact of higher-order structure varies with the epidemic parameters in non-intuitive ways in this scenario (Section~\ref{sec:varying-impact}).

This paper extends a previous introduction of these ideas from the same authors~\cite{Eldaghar-2024-spatial-hypergraphs}. Key differences in this greatly expanded version include (i) a discussion of the model beyond two-dimensional spatial graphs, (ii) a theoretical characterization of the connectivity of the model, and  (iii) studies of the hypergraph model in terms of clustering coefficients as well as (iv) graph diffusion. Finally, the epidemic study includes a more detailed analysis of the specific hypergraph mechanisms underlying the differences observed in~\cite{Eldaghar-2024-spatial-hypergraphs}.  We discuss additional related work in the space of random geometric graphs, random geometric hypergraphs, and random geometric simplicial complexes in Section~\ref{sec:related}. The ability to easily interpolate between a pairwise graph and a hypergraph appears to be a unique feature of this model.

\section{Model Description}
\label{sec:model} 

\begin{table}
\centering 
\begin{tabularx}{0.75\linewidth}{rX}
\toprule 
Symbol & Interpretation \\
\midrule 
$\mX$ & the set of all points \\
$D(\cdot, \cdot)$ & the distance metric \\  
$\mX_v$ & the coordinate of a specific point $v$ \\ 
$d_v$ & the degree associated with point $v$ \\ 
$\boldsymbol{d}$ & the set of all degrees for all vertices \\ 
$N(v)$ & the set of $d_v$ nearest neighbors of $v$ under distance $D$ \\ 

\bottomrule 
\end{tabularx}
\caption{A summary of our notation} 
\label{}
\end{table}

The model we propose is simple, fast, and flexible. It begins with a set of points in a space $\mX$ along with a distance metric $D$. In all of our studies, these points are sampled from the $d$-dimensional unit-cube $[0,1]^d$ uniformly at random, although in principle a different distribution can be used. For each point $v$, we give it a radius of influence, $d_v$, expressed as a number of nearest neighbors, which we call the \emph{degree}.  Note that this in a mild abuse of notation, because in the final graph construction, the degree of the node is typically larger than $d_v$, although $d_v$ is a lower bound on the degree.  We typically sample values of $d_v$ from a log-normal distribution.

Let $\mX_v$ be the coordinates of a point $v$ and $d_v$ be the associated degree. In a standard spatial nearest-neighbor graph, we would connect node $v$ to the nearest $d_v$ neighbors by adding edges for each neighbor. Our model is based on this setup. However, we wish to \emph{cluster} the points within $v$'s radius of influence. Let $N(v)$ be the set of $d_v$ nearest neighbors for point $v$, and let $r_v$ be the distance to the $d_v$th nearest neighbor, formally, $r_v = \max_{u \in N(v)} D(v, u)$. The goal is to cluster the set of points in $N(v)$ and form edges or hyperedges based on these \emph{clusters} rather than individual points. (To be completely clear, we do not consider $v$ in the set of points we cluster.)  We illustrate this process in Figure~\ref{fig:model-description}. Formally, for each cluster of points in $N(v)$ we create a hyperedge consisting of all points in the cluster combined with the original point $v$. The inspiration for this idea is that we would have a \emph{group interaction} among $v$ and points that are all themselves \emph{close} within its region of influence. 

\begin{figure}
    \centering
    \includegraphics[width=0.65\linewidth]{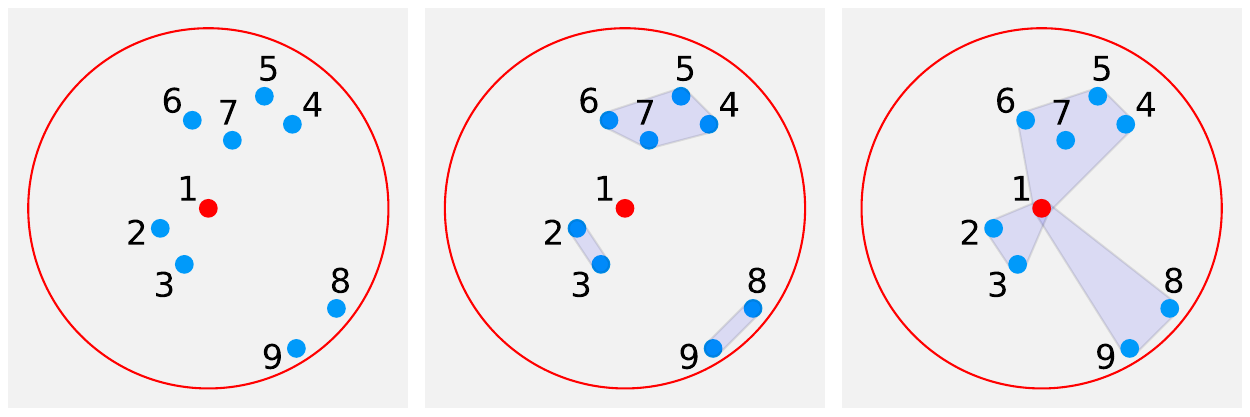}
    \caption{How hyperedges are formed in our model. Eight nearest neighbors are computed for node $1$ (leftmost plot) which are then clustered into $3$ clusters (middle plot). Finally, each of the clusters serves as a separate group interaction for node $1$ and they become hyperedges.}
    \label{fig:model-description}
\end{figure}

The key feature of the model is that we can control the behavior of the graph by controlling the behavior of the clustering function. Suppose that each point in $N(v)$ is clustered into a \emph{separate cluster}. Then we simply recover the pairwise nearest neighbor graph among the points. Alternatively, suppose that $N(v)$ is clustered into a \emph{single cluster}. Then we recover a geometric hypergraph construction where all edges are hyperedges (unless $d_v = 1$). Thus, by varying the cluster sizes, we can control the extent of hyperedge effects. This idea is illustrated in Figure~\ref{fig:varying-alpha}.

\begin{figure}
    \centering
    \includegraphics[width=0.75\linewidth]{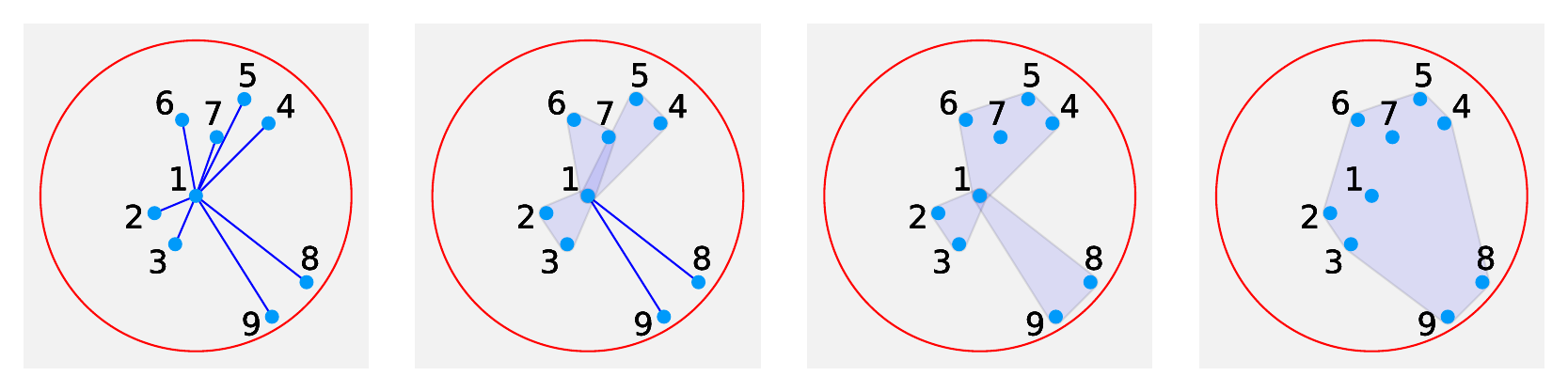}
    \caption{As we vary the number of clusters produced among each node's nearest neighbors, we are able to interpolate between purely pairwise (leftmost plot) and purely higher-order structure (rightmost plot). }
    \label{fig:varying-alpha}
\end{figure}

The model could accommodate any clustering function we desire and, in addition, support attributes on nodes as well. However, for concreteness, we use the DBSCAN clustering algorithm~\cite{ester1996density}. The DBSCAN method depends on a single parameter to control the clustering: $\eps$. The choice of $\eps$ is a distance and governs when two points are considered in the neighborhood of each other, and, in turn, whether or not they might be placed in the same cluster. While we can change $\eps$ to achieve the interpolation from pairwise to hypergraph we want, the effect depends greatly on the local context of each node. Consequently, we wish to develop a simple and interpretable parameter to control the interpolation. 

We introduce a parameter $\alpha$ for this goal. When we set $\alpha = 0$,  we want a pairwise graph -- corresponding to the scenario at the left of Figure~\ref{fig:varying-alpha}. When we set $\alpha = 2$, we want a graph where all edges are complete hyperedges within the region of influence -- corresponding to the scenario at the right of Figure~\ref{fig:varying-alpha}. For $\alpha=1$, we'd like a point in the middle where there are around $\sqrt{d_v}$ clusters. Consequently, we want to build a function
\[ \eps_{\alpha}(r_v, d_v, d) \]
that depends on the number of neighbors $d_v$, the distance enclosed by the region of influence $r_v$, along with the ambient dimension $d$ and produces a value of $\eps$ for DBSCAN to achieve this goal. Figure~\ref{fig:scaling-distance} illustrates the reason why the function needs to scale with both $d_v$ and $r_v$ (and implicitly, the ambient dimension $d$). 

\begin{figure}
    \centering
    \includegraphics[width=0.75\linewidth]{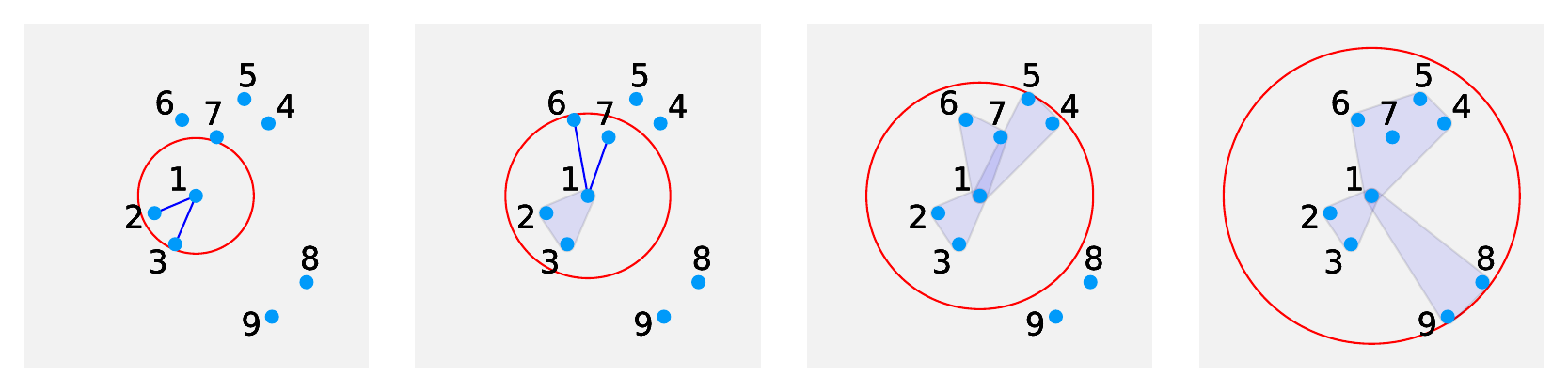}
    \caption{Hyperedges formed around the node $v=1$ as the number of nearest neighbors, $d_v$, increases. We want the neighborhood radius parameter $\eps$ of DBSCAN to scale with $d_v$ and $r_v$ (the maximum distance among the $d_v$ neighbors). To establish a concrete and controllable model, we design a function $\epsilon_\alpha$ that interpolates between individual clusters around each point and a single cluster. This function needs to scale with the parameters of the neighborhood to achieve its aims.}
    \label{fig:scaling-distance}
\end{figure}

We map $\alpha$ to distance parameter $\eps$ in DBSCAN via 
\begin{equation}
    \eps_\alpha(r_v, d_v, d) = \begin{cases} 
    \alpha^{1/d} \frac{r_v}{d_v^{1/2d}}, \qquad\text{ if }\alpha\in [0,1]\\
    \frac{r_v}{d_v^{1/2d}} + (\alpha-1)\left(r_v-\frac{r_v}{d_v^{1/2d}}\right). \quad \alpha\in(1,2]
    \end{cases}
    \label{eq:dbscan-radius}
\end{equation}

 Recall that we want about $\sqrt{d_v}$ clusters when $\alpha=1$. So we want to scale our value of $\eps_1$ to do so. 
The intuition for our choice of $\eps_1$ is that if $B_{r_v}^d$ denotes a $d-$dimensional ball with radius $r_v$, then if we split the volume of this ball,  $Vol(B_{r_v}^d)$, into $\sqrt{d_v}$ equal pieces (ignoring the issue of sphere packing), it would yield pieces of volume $\sqrt{d_v}\eps_1^d \approx r_v^d$ so that $\eps_1 \approx \frac{r_v}{d_v^{1/2d}}$. 
Similarly, if we think about placing a single ball rather than performing DBSCAN, a scaling argument would yield an expected hyperedge size of $\frac{d_v}{vol(B_{r_v}^d)}vol(B_{\epsilon_\alpha}^d) \sim d_v\frac{\epsilon_\alpha^d}{r_v^d}$. 
The choice of $\epsilon_\alpha$ in Equation~\eqref{eq:dbscan-radius} then gives an estimate of linear scaling for $\alpha\in [0,1]$ to account for changes in the dimension $d$.

Our complete spatial hypergraph construction procedure is described in Algorithm~\ref{alg:model}. This shows how we build up a list of hypergraph edges by considering the results of the clustering. We also have our computational codes available at \url{https://github.com/oeldaghar/spatial-hypergraph-epidemics}
that implement this routine. To give an intuition for the resulting graph, a few small samples are shown in Figure~\ref{fig:graph-demo}.

\begin{algorithm} 
\caption{Spatial Hypergraph Model}
\begin{algorithmic}[1]
\Function{SpatialHypergraph}{$\mX$,$\boldsymbol{d}$,$f(\cdot ,\alpha)$} where $\vd$ gives the degree for each node, and $f(\cdot,\alpha)$ is a clustering algorithm with parameter(s) $\alpha$ such that $\alpha = 0$ will cluster into individual pieces, $\alpha = 2$ will cluster into a single group, and $\alpha = 1$ will cluster into about $\sqrt{\text{points}}$ groups
    \State $H \gets [\text{ }]$ \Comment{Initialize an empty list of hyperedges}
    \For{$v = 1:n$} \Comment{for each point in the the set $\mX$}
        \State $N(v)\gets $ $d_v$ nearest neighbors of $\boldsymbol{X}_v$ excluding $\boldsymbol{X}_v$
        \State $\boldsymbol{Y} \gets \{ X_u \text{ for } u \in N(v)\}$ \Comment{Build a subset of points to cluster}
        \State $C_v \gets f(\boldsymbol{Y},\alpha)$ \Comment{run the clustering with parameter $\alpha$}
        \For{$C$ in $C_v$} \Comment{for each cluster in the output}
            \State APPEND($C,v$) \Comment{add vertex $v$ to the cluster before we add it as a hyperedge}
            \State APPEND($H,C$) \Comment{add a new hyperedge to the graph}
        \EndFor
     \EndFor
    \State \Return $H$
\EndFunction
\end{algorithmic}
\label{alg:model}
\end{algorithm}

\begin{figure}
    \centering
    \begin{subfigure}[h]{0.32\textwidth}
        \centering
        \includegraphics[width=\textwidth]{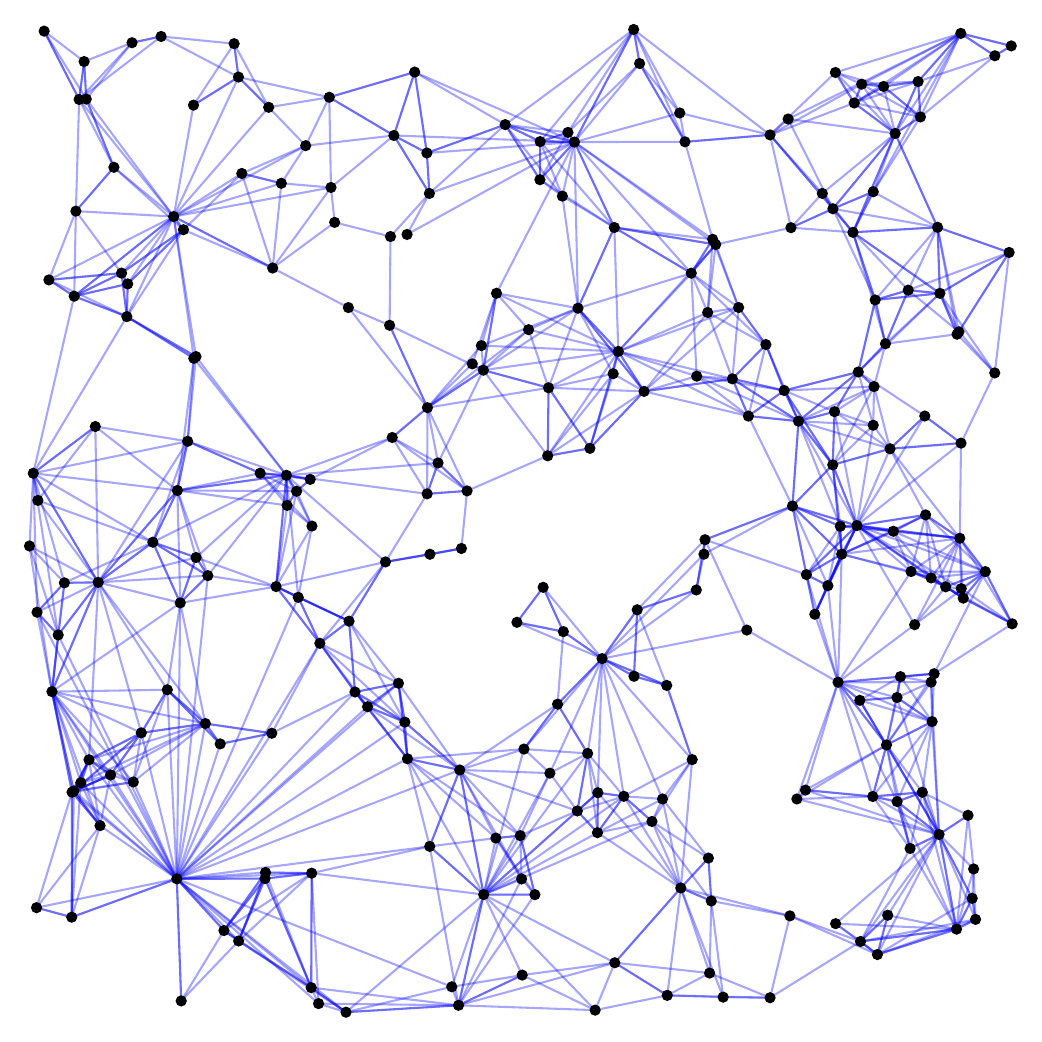}
        \label{fig:graph-demo-1}
    \end{subfigure}
    \begin{subfigure}[h]{0.32\textwidth}
        \centering
        \includegraphics[width=\textwidth]{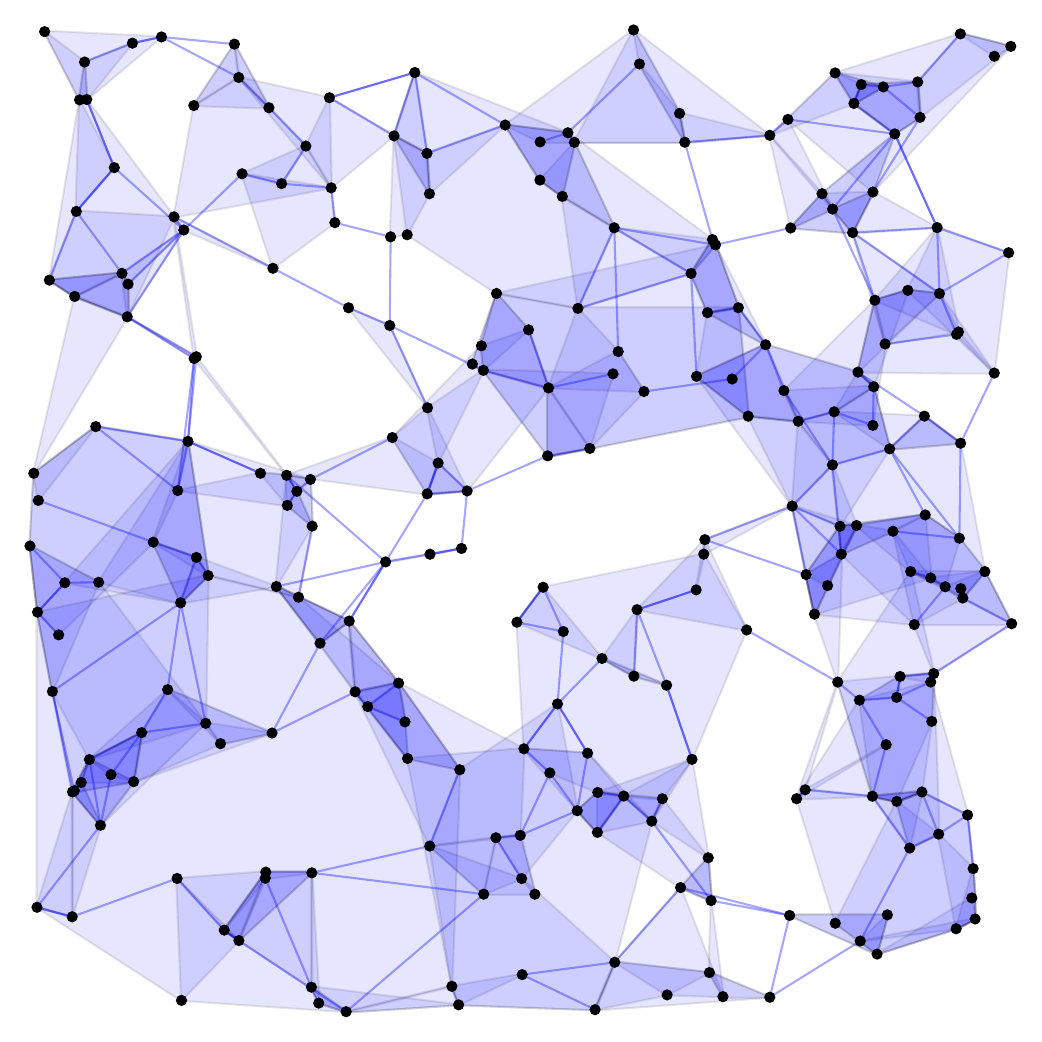}
        \label{fig:graph-demo-2}
    \end{subfigure}
    \begin{subfigure}[h]{0.32\textwidth}
        \centering
        \includegraphics[width=\textwidth]{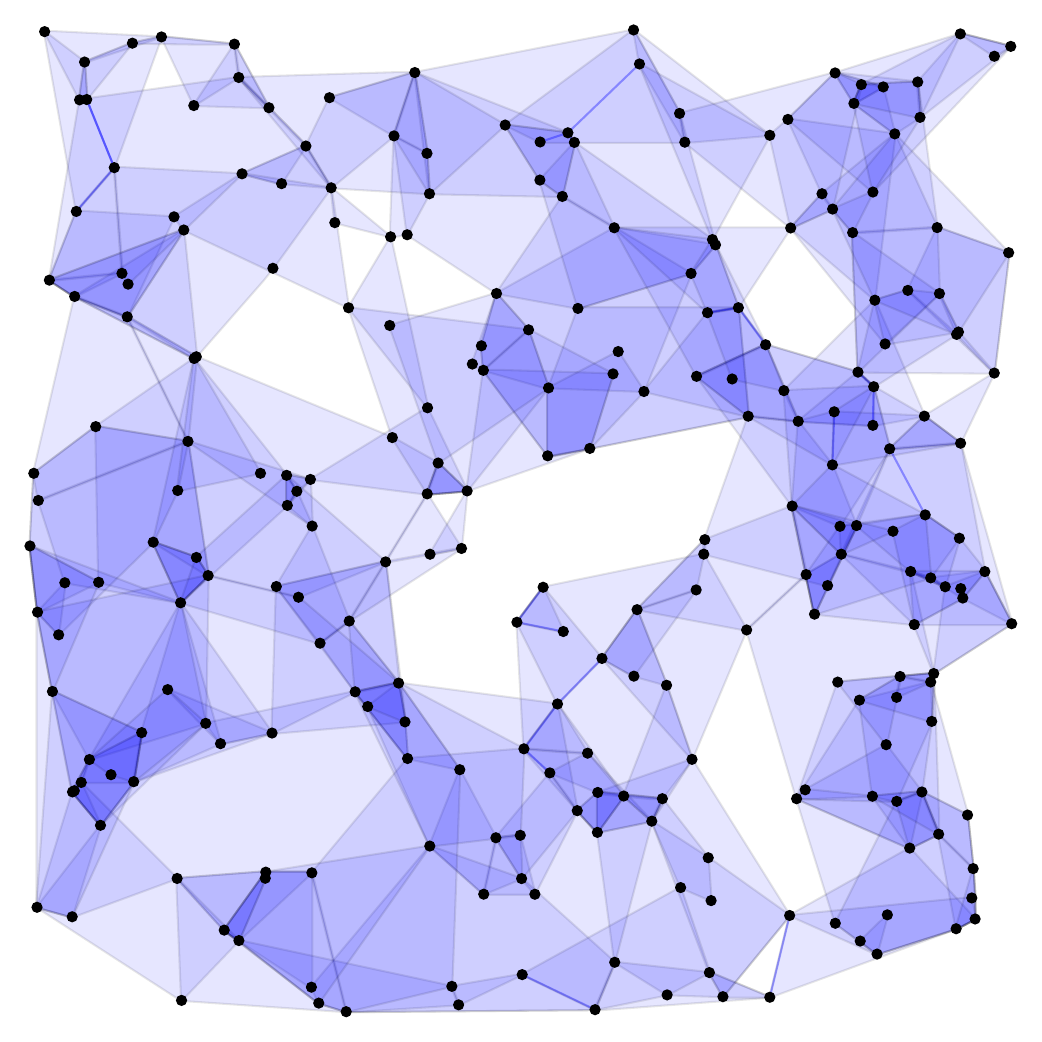}
        \label{fig:graph-demo-3}
    \end{subfigure}
    \caption{Hypergraphs generated on $n=250$ nodes in $d=2$ dimensions for $\alpha = 0, 1, 2$ (left to right) for the same spatial embedding $\boldsymbol{X}$ and specified degrees $\boldsymbol{d}$.}
    \label{fig:graph-demo}
\end{figure}

\subsection{The choice of clustering distance}
\label{sec:model-distance}
This choice of clustering distance $\eps_\alpha$ differs from that in our previous paper~\cite{Eldaghar-2024-spatial-hypergraphs}. In particular, the scaling in~\cite{Eldaghar-2024-spatial-hypergraphs} did not scale with dimension and had a different midpoint $\eps_1$ value. We illustrate the importance of setting this parameter correctly in Figure~\ref{fig:different-alpha-choices}. 

We generate data from our model where $n=10000$  points are sampled from $[0,1]^d$, the degree distribution is sampled from a log-normal with parameters $\mu = \log(3)$ and $\sigma = 1$ (which gives an overall average degree of around 3.5). We fix the geometric information $\boldsymbol{X}$ and node degrees $d_v$ for 25 simulations as we vary the parameter $\alpha$. Put another way, we do not regenerate the spatial information for each distinct value of $\alpha$, and reuse use the same information for an entire sweep through the choices of $\alpha$.

In Figure~\ref{fig:different-alpha-choices}, we compare our choice of clustering distance parameter (left column) with a purely linear scaling  $\eps'_\alpha = \alpha r_v / 2$ (so that $\eps = r_v$ when $\alpha = 2$) and also against  
\begin{equation}
    \eps''_\alpha(r_v, d_v, d) = \begin{cases} 
    \alpha \frac{r_v}{d_v^{1/2d}}, & \alpha\in [0,1]\\
    \frac{r_v}{d_v^{1/2d}} + (\alpha-1)\left(r_v-\frac{r_v}{d_v^{1/2d}}\right). & \alpha\in(1,2]
    \end{cases}
    \label{eq:dbscan-radius-alt}
\end{equation}
The change in the alternative equation is that we scale by $\alpha$ instead of $\alpha^{1/d}$ when $\alpha \le 1$. 

The quantity we measure in the figure is the number of total hyperedges. We expect this to decrease as we increase $\alpha$ because we are forming larger hyperedges as the clusters get larger. We would ideally like this decay to be smooth. The figure shows the smoothest decay for our choice of Equation~\eqref{eq:dbscan-radius}. Linear scaling (in the middle column) looks good in $2$ dimensions, but results in almost no hyperedges until $\alpha$ is larger for higher dimensions. Again, this is expected because randomly spaced points separate out in higher dimensions and become further away. Likewise, we see the same impact in Equation~\eqref{eq:dbscan-radius-alt} with the linear scaling between $\alpha = 0$ and $1$. The \emph{jump} near $\alpha=2$ occurs because of a discontinuity in the behavior of the clustering algorithm as we go from $2$ clusters to $1$ cluster at each vertex, which cannot be smooth.

\begin{figure}[t]
    \centering
    \includegraphics[width=0.9\linewidth]{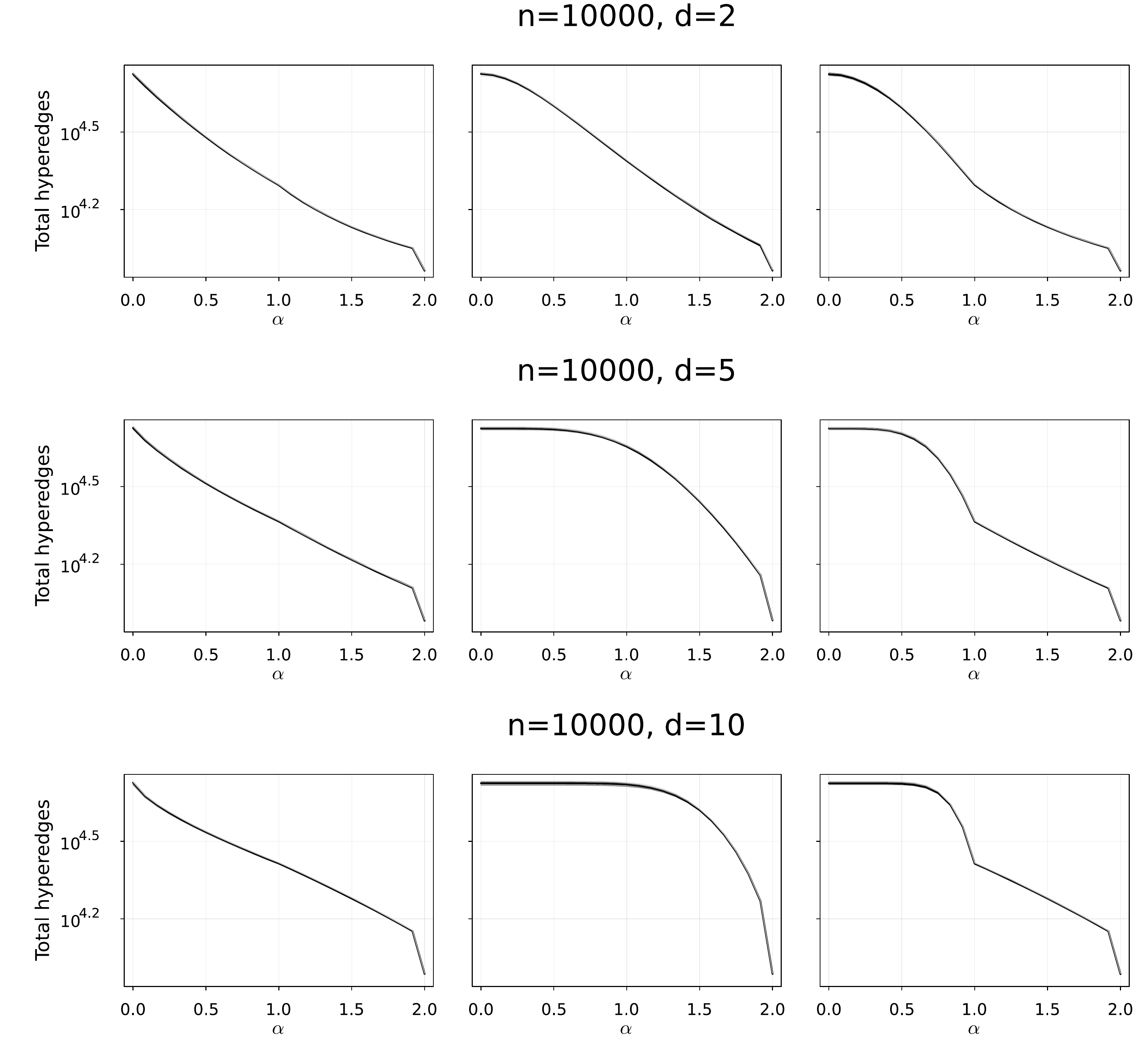}
    \caption{Total number of hyperedges formed using functions for scaling the radius parameter in DBSCAN as we vary dimensions. The functions used for $\epsilon_\alpha$ are: Equation~\eqref{eq:dbscan-radius} (leftmost column), linear scaling with $\epsilon_\alpha=\alpha r_v/2$ (middle column), and Equation~\eqref{eq:dbscan-radius-alt} (rightmost column). Gray to black to gray bands (only visible when zoomed in) in the middle and right columns indicate 10th, 25th, 50th, 75th, and 90th percentiles. This shows that our choice of Equation~\eqref{eq:dbscan-radius} gives the smoothest interpolation between pairwise effects and pure hyperedge effects as $\alpha$ varies from $0$ to $2$.}
    \label{fig:different-alpha-choices}
\end{figure}

\subsection{Invariance of connected components}
We next show that the connectivity of the overall graph or hypergraph is invariant to the choice of $\alpha$ and depends only on the set of points $\mX$ and the assigned degrees $\vd$. This is intuitively straightforward because the edges formed only depend on the points and the specific number of neighbors chosen by $d_v$ for each vertex, but the following argument makes this intuition rigorous. 

\begin{theorem} \label{thm:connectivity}
$\alpha$-Invariance of Connected Components

Let $G_\alpha=H(\boldsymbol{X},\boldsymbol{d},\alpha)$ denote the spatial graph generation model outlined above with parameters $\boldsymbol{X},\boldsymbol{d},\alpha$. For fixed values of $\boldsymbol{X},\boldsymbol{d}$, let $C_{\boldsymbol{X},\boldsymbol{d}}(\alpha)$ denote the connected components in $G_\alpha$. Then $C_{\boldsymbol{X},\boldsymbol{d}}(\alpha)$ is independent of $\alpha$.
\end{theorem}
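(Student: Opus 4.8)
The plan is to reduce the statement to a purely combinatorial fact about a single auxiliary graph that carries no dependence on $\alpha$. Define the \emph{neighbor graph} $\widehat{G}=\widehat{G}(\boldsymbol{X},\boldsymbol{d})$ on vertex set $\boldsymbol{X}$ by placing an edge between $u$ and $v$ exactly when $u\in N(v)$ or $v\in N(u)$; this object is determined entirely by $\boldsymbol{X}$ (which fixes the metric $D$ and hence all nearest-neighbor orderings) and $\boldsymbol{d}$ (which fixes how many neighbors each vertex takes), and not by $\alpha$. I claim that for every $\alpha\in[0,2]$ the connected components of $G_\alpha$ --- taken in the usual hypergraph sense, i.e.\ those of the clique expansion (equivalently, of the bipartite incidence graph) --- coincide with the connected components of $\widehat{G}$. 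Since the right-hand side does not involve $\alpha$, this gives the theorem. Throughout I use the convention under which the model is stated in Section~\ref{sec:model}: the clustering $C_v=f(\boldsymbol{Y},\alpha)$ is always a partition of $N(v)$ into nonempty clusters (this is exactly what makes $\alpha=0$ recover the nearest-neighbor graph), and consequently every hyperedge of $G_\alpha$ has the form $C\cup\{v\}$ for some vertex $v$ and some nonempty $C\subseteq N(v)$.

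First I would show that every $\widehat{G}$-component is contained in a single $G_\alpha$-component. Let $\{u,v\}$ be an edge of $\widehat{G}$, say with $u\in N(v)$. In the construction of $G_\alpha$ the point $u$ lies in exactly one cluster $C\in C_v$, and by construction the hyperedge $C\cup\{v\}$ belongs to $G_\alpha$ and contains both $u$ and $v$, so $u$ and $v$ are in the same component of $G_\alpha$. Hence every edge of $\widehat{G}$ joins two vertices of a common $G_\alpha$-component, and walking along a $\widehat{G}$-path shows that any two vertices in the same $\widehat{G}$-component lie in the same $G_\alpha$-component.

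Next I would establish the reverse inclusion: every $G_\alpha$-component is contained in a single $\widehat{G}$-component. Because $G_\alpha$-components are generated by connectivity through hyperedges, it suffices to check that each hyperedge of $G_\alpha$ lies inside one $\widehat{G}$-component. Such a hyperedge is $e=C\cup\{v\}$ with $C\subseteq N(v)$; every $u\in C$ satisfies $u\in N(v)$, so $\{u,v\}$ is an edge of $\widehat{G}$ and $u$ lies in $v$'s $\widehat{G}$-component. Thus all of $e$, and hence (taking unions along hyperedge-paths) every $G_\alpha$-component, sits inside a single $\widehat{G}$-component. Combining the two inclusions, the partition of $\boldsymbol{X}$ into $G_\alpha$-components equals the partition into $\widehat{G}$-components for every $\alpha$, which is the claim.

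The argument is short once the right object is isolated; the only care needed is the bookkeeping in the second paragraph --- that $v$ is appended to \emph{every} cluster of $N(v)$, so $v$ is joined to all of $N(v)$ regardless of how DBSCAN slices $N(v)$ --- together with fixing the convention that DBSCAN's output (including any ``noise'' points, treated as singleton clusters) forms a partition of $N(v)$. That convention is precisely the one under which the limiting cases $\alpha=0$ and $\alpha=2$ behave as described, so no generality relevant to the model is lost, and I do not expect any genuine obstacle beyond making these conventions explicit.
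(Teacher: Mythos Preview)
Your proof is correct and rests on the same core observation as the paper's: every hyperedge of $G_\alpha$ has a designated center $v$ and is contained in $N(v)\cup\{v\}$, so any $u\in N(v)$ shares a hyperedge with $v$ in every $G_\alpha$. The paper organizes this slightly differently --- rather than introducing your auxiliary neighbor graph $\widehat{G}$ and showing $C(G_\alpha)=C(\widehat{G})$ for all $\alpha$, it directly takes a $uv$-walk $w_0,\dots,w_k$ in $G_{\alpha_1}$, identifies the center $z_i$ of the hyperedge witnessing each step $w_iw_{i+1}$, and observes that $w_0,z_0,w_1,z_1,\dots,w_k$ is a walk in \emph{both} $G_{\alpha_1}$ and $G_{\alpha_2}$. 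Your packaging via $\widehat{G}$ (which is just $G_0$) is arguably cleaner, since it names the $\alpha$-independent object once and for all and makes both inclusions symmetric; the paper's walk-splicing argument is a bit more ad hoc but avoids the need to state the partition convention on DBSCAN's output that you (rightly) make explicit. Substantively the two arguments are interchangeable.
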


\begin{proof}
Let $\boldsymbol{X},\boldsymbol{d}$ be fixed and consider two parameters $\alpha_1$ and $\alpha_2$. 
Let $u,v$ denote the same pair of nodes in $G_{\alpha_1}$ and $G_{\alpha_2}$. 
We show that if one graph admits a $uv$-walk then so does the other graph. 
Without loss of generality, let $u=w_0,w_1,\dots,w_k=v$ denote a $uv$-walk in $G_{\alpha_1}$. 
For any two adjacent nodes on this path, $w_iw_{i+1}$, there exists some hyperedge $h_i\in E(G_{\alpha_1})$ where $w_i,w_{i+1}\in h_i$. 
However, $h_i$ has an associated node used to form this hyperedge during the graph generation process, call it $z_i$. 
Then $w_i,w_{i+1}$ are contained in the $\boldsymbol{d}_{z_i}$ nearest neighbors of $z_i$. 
Since $\boldsymbol{X}$ and $\boldsymbol{d}$ are fixed, this the true for both $G_{\alpha_1}$ and $G_{\alpha_2}$.
Hence $w_i,w_{i+1}$ are adjacent to $z_i$ in both $G_{\alpha_1}$ and $G_{\alpha_2}$. Thus, the sequence $u=w_0,z_0,w_1,z_1,\dots,z_{k-1},w_k=v$ is a $uv$-walk in both $G_1,G_2$.
Hence $C_{\boldsymbol{x},\boldsymbol{d}}(\alpha)$ is independent of $\alpha$.
\end{proof}

A key impact of this result is that it allows us to inherit the usual results about connectedness such as critical thresholds and giant components when $\boldsymbol{X}$ and $\boldsymbol{d}$ are constructed to match such statements. For instance, ref.~\cite{diaz2007dynamicrandomgeometricgraphs} characterizes the behavior of a random geometric graph model in terms of the behavior around a critical connectivity threshold. Because the connectivity of the model is equivalent to the underlying pairwise graph, these same thresholds apply to our model as well. 

\subsection{Graph statistics}
\label{sec:model-statistics}
We continue by empirically studying simple graph statistics. We study what  happens as we vary the dimension $d$ and value of $\alpha$ in Figure~\ref{fig:hypergraph-stats}. We report the average number of hyperedges of a given size, the total number of hyperedges, and the total number of triangles in the pairwise projected graph as a function of $\alpha$. Recall that the pairwise projected graph, or \emph{clique expansion}, results in a clique to represent each hyperedge of the original graph. We use the same experimental setup as in Figure~\ref{fig:different-alpha-choices} and Section~\ref{sec:model-distance}. 

In Figure~\ref{fig:hypergraph-stats}, as we increase $\alpha$ and consequentially $\epsilon_\alpha$, the total number of hyperedges decreases while the total number of triangles in the pairwise projection increases monotonically. This result is expected because the projected graph has more cliques, which adds more triangles. The key point of this figure is that we get larger hyperedges with smaller values of $\alpha$ as the dimension $d$ increases. In terms of the impacts on triangles, this results in a steeper initial increase in triangles, although overall fewer triangles as the hyperedges get larger.

\begin{figure}[t]
    \centering
    \includegraphics[width=0.95\linewidth]{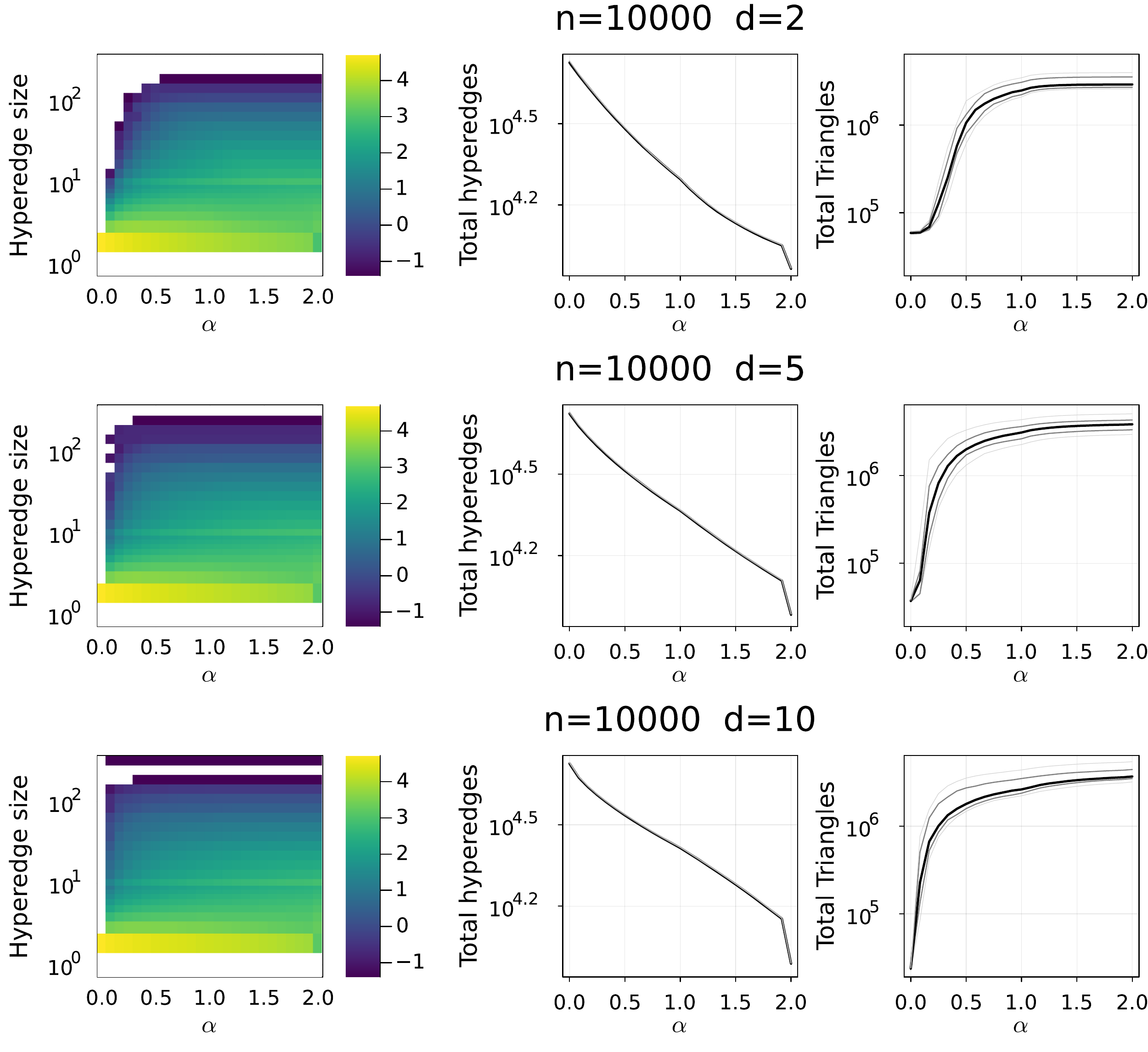}
    \caption{Simple graph statistics as we vary $\alpha$ (Equation~\eqref{eq:dbscan-radius}) and the dimension $d$. The $\alpha$ denotes the interpolation parameter while we show (leftmost column) the average number of hyperedges of a given size, (middle column) the total number of hyperedge present (which is repeated from Figure~\ref{fig:different-alpha-choices}), and (rightmost column) the total number of triangles in the projected graph for $d=2,5,10$. Grey to black to gray bands in the middle and right columns indicate 10th, 25th, 50th, 75th, and 90th percentiles. In the heatmaps, we show the average number of edges in each bin over the 25 trials. The entries are log-scaled, so $4$ corresponds to $10^4$ edges. }
    \label{fig:hypergraph-stats}
\end{figure}

\section{Related Work}
\label{sec:related} 
As mentioned in the introduction, there are a variety of similar geometric graph or spatial hypergraph models, although none of them enable the same type of seamless mapping from pairwise to higher-order that we achieve. In this context, our model extends both pairwise geometric random graph models as well as geometric hypergraph models. We briefly review these constructions and give pointers for more information. We have a longer survey on random hypergraph models in preparation~\cite{Zhu-preprint}. 

\paragraph{Direct inspiration}
Our proposed model draws inspiration directly from the geometric protean model~\cite{bonato2012geometric} and a simplified extension~\cite{bonato2014dimensionality}. These describe a similar latent space model for graphs where nodes have fixed degrees.

\paragraph{Geometric Pairwise Models}
Pairwise random geometric models are created by pairing geometric information $\boldsymbol{X}_v$ for each node $v$ with a distance function $D$ and some rule for connecting nodes that depends on $D$. A common construction is just to connect points below a fixed distance. Given two nodes $u,v$ and their coordinates $\boldsymbol{X}_u, \boldsymbol{X}_v$, an edge is added if $d(\boldsymbol{X}_u, \boldsymbol{X}_v)\le c$ for some threshold $c$. The threshold $c$ may vary with nodes as well. Another common variant is $k$-nearest neighbor (kNN) graphs where each node connects to its $k$ closest nodes. This is akin to using different radii, $r_v$, for each node.
Yet another is to relax to a soft geometric model using a kernel function $f$ by connecting nodes with probability $f(d(\boldsymbol{X}_u, \boldsymbol{X}_v))$ that typically decays as the distance increases. 

See the survey~\cite{barthelemy2022spatial} for an overview of spatial graph models. Spatial models are specific instances of latent space models where edge connection depends on latent node features. A key focus of research on spatial graph models involves connectivity thresholds~\cite{diaz2007dynamicrandomgeometricgraphs}. This has important implications for routing in \emph{ad-hoc} networks of agents where the connection radius is implied by a radio transmission. 

\paragraph{Geometric Hypergraph Models}
There is a completely different notion of a geometric hypergraph described in the survey~\cite{barthelemy2022class}. This alternative model directly creates the bipartite incidence matrix of the hypergraph. The idea is to create two different types of points among the samples of $\mX$. One type of point represents nodes and the other type represents hyperedges. Then we directly build the incidence matrix of the hypergraph by using any of the spatial connection methods for a geometric graph. 
Connectivity properties of a model similar to that in ref.~\cite{barthelemy2022class} was analyzed asymptotically~\cite{de2023connectivity}.

Another class of methods is based on random simplices~\cite{kahle2009topology, Kahle2014, bobrowski2018topology}.
In such methods, topological tools such as the \v{C}ech complex or the Vietoris–Rips complex are used to form hyperedges based on spatial information. Another method~\cite{lunagomez2017geometric} makes use of the same topological tools but places priors on point configurations in order to induce desired structures in the generated simplices.
While of note, these methods are more restrictive than those of hypergraphs. A relaxation from simplices to a geometric hypergraph model of varying sizes was made using latent space modeling and sampling~\cite{turnbull2024latent}.  In particular, the latent space model uses a shared sequence of radii for all nodes $r_2<\cdots<r_K$. A hyperedge of size $k$ is placed among vertices whenever the respective balls with radius $r_k$ intersect.

Our proposed model is distinct from both of these ideas. We allow the hyperedges to vary based on a clustering algorithm instead of a random point selection. Also, we directly generate hypergraphs instead of going through simplicial complexes. 

\section{Clustering Coefficients on Hypergraphs}
\label{sec:clustering}

There are two common clustering coefficients for a pairwise graph. 
The global clustering coefficient of a graph $G$ is defined as 
\[C(G) = \frac{6\times T}{P_2},\]
where $T$ denotes the total number of triangles and $P_2$ denotes the total number of length-two paths.
These length-two paths are often called \emph{wedges}.  
The average local clustering coefficient for a graph is given by
\[\bar{C}(G)   = \frac{1}{|V|}\sum_{v\in V} C_v = \frac{1}{|V|}\sum_{v\in V} \frac{T_v}{\binom{k_v}{2}} ,\]
where $V$ is the vertex set, $T_v$ is the number of triangles that contain the node $v$, and $k_v$ is the number of neighbors of node $v$. 

As mentioned in the introduction, when a pairwise concept is generalized to a higher-order structure, it often has multiple generalizations. This is the case for clustering coefficients, and a number of different generalizations of clustering coefficients for hypergraphs have been proposed (see~\cite{miyashita2024clustering,ha2024clustering}). We will compare how a few of these behave on our model as we vary $\alpha$ and the spatial dimension $d$. 

Let $\mathcal{H}$ denote a hypergraph with a set of vertex $V$ and a set of hyperedges $H$. 
Perhaps the simplest such generalization is to project the hypergraph onto a graph via \emph{clique expansion} and then compute the pairwise clustering coefficient for the projected graph. 
In clique expansion, each hyperedge is replaced by a clique over all of the nodes within the hyperedge. We then arrive at two distinct clustering coefficients in the hypergraph based on the global and local clustering coefficients in the projected graph. 

The projected graph is unweighted. However, it is built with a union of cliques. We can also consider the projected multigraph. This is a multigraph interpretation of the weighted projected graph from \texttt{networkx} for example~\cite{Hagberg2008}. In this case, we allow the graph to have multiple edges as we take the union of all the projected cliques. (A weighted version of this same multigraph will arise in our study of epidemics as well.) We use this multigraph to define a weighted local clustering coefficient by counting the number of triangles -- including repetitions due to multiedges -- divided by the number of wedges centered at a node $v$. In this case, it is possible for a node to have \emph{more triangles} than wedges. A scenario where this occurs is if the edge that closes the triangle is repeated whereas the edges defining the wedge are not. For this reason, we clip the maximum value of the local weighted clustering coefficient at $1$. The overall value is then 
\[ \bar{C}_M(G) = \frac{1}{|V|} \sum_v \min \biggl(\frac{T^M_v}{\binom{k_v^M}{2}}, 1\biggr), \] 
here $T_v^M$ is the number of triangles and $k_v^M$ is the number of edge end-points that start at node $v$ -- both including multiplicities due to multiedges. 
We note that there are other notions of a weighted clustering coefficient as well~\cite{Fagiolo2007}. While we are not aware of any place $\bar{C}_M(G)$ has been proposed, we suspect that it has been. 

Another intuitive idea is that triangles are the shortest cycles without repeated edges. This leads to a different generalization of clustering coefficients to bipartite graphs~\cite{robins2004small}. We can apply this idea in the bipartite representation (star expansion) of a hypergraph. 
This clustering coefficient amounts to computing the quantity
\[ CC_4 = \frac{4\times C_4}{L_3},\]
where $C_4$ denotes the number of 4-cycles and $L_3$ denotes the number of 3-paths in a bipartite graph.

In summary, we will study the four quantities: 
\[ 
\begin{array}{@{}rl@{}}
C & \text{global clustering coefficient in projected graph (clique expansion)} \\
\bar{C} & \text{average local clustering coefficient in projected graph (clique expansion)} \\
\bar{C}_M & \text{multigraph local } \\
CC_4 & \text{the bipartite clustering coeff.~in the hypergraph incidence matrix (star expansion)} 
\end{array}\]

\subsection{Experimental results}
We study the clustering coefficients in the same experimental regime from Sections~\ref{sec:model-distance}~and~\ref{sec:model-statistics}. The results of computing the four different clustering coefficients are shown in Figure~\ref{fig:cc-figure}. The bands indicate the minimum and maximum values across $25$ trials, while both the embedding $\boldsymbol{X}$ and degrees $\boldsymbol{k}$ are fixed across values of $\alpha$ for each independent trial.

The first thing we note is that there is no regularity in the behavior as a function of $\alpha$. Both the global and local clustering coefficients ($C$ and $\bar{C}$) initially increase before decreasing for large $\alpha$. We find this result puzzling, as we found that the total number of triangles in the projected graph grows with $\alpha$ in previous figures (Figure~\ref{fig:hypergraph-stats}). We explain this finding in Section~\ref{sec:clustering-scenario} next. 

Our next observation is the critical impact of the spatial dimension $d$ on the results. All of the different clustering coefficients show changes in the behavior with regard to this parameter. For instance, $\bar{C}$ is much lower when $d=10$ compared with $d=2$. Consider also the results with $\bar{C}_M$ as a function of $d$. Here, we observe that $\bar{C}_M$ is \emph{smaller} when $\alpha=1$ and $d=2$ whereas $\bar{C}_M$ is larger. Finally, for $CC_4$ we see an overall decrease in clustering as dimension increases.

In comparison with the three clustering coefficients on the pairwise graph, the behavior of the bipartite clustering coefficient $CC_4$ in the star expansion is more closely aligned with our expectations that as $\alpha$ increases one would expect clustering coefficients to increase.

Overall, these results suggest that clustering coefficients in hypergraphs are far more subtle and complex than in pairwise graphs. This supports the idea that there could be a multitude of reasonable generalizations depending on exactly which features are desirable to capture in the generalization. It also suggests that there probably is not going to be a universal clustering coefficient in all scenarios.

\begin{figure}[t]
    \centering
    
    \includegraphics[width=\linewidth]{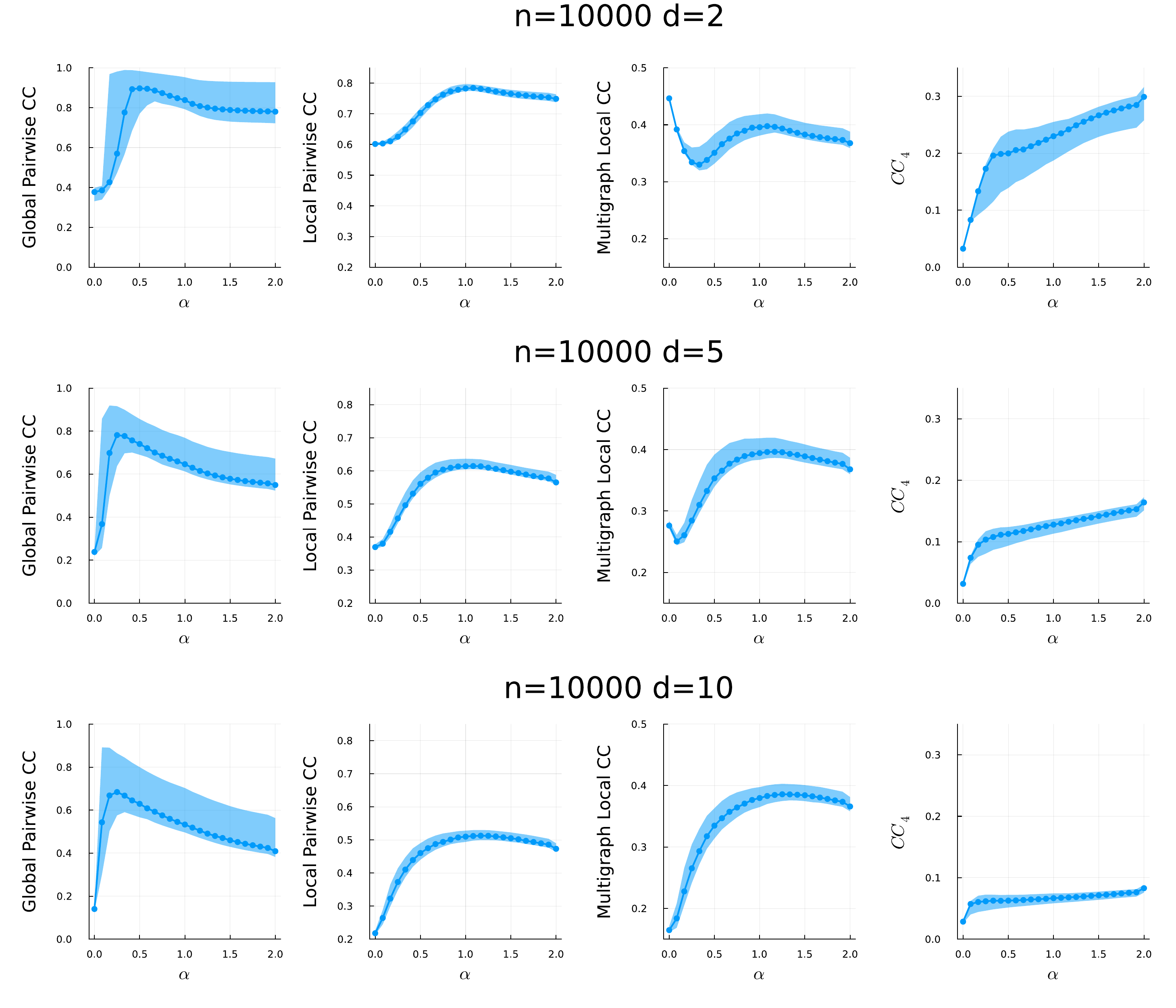}
    \caption{Various clustering coefficients (columns) as we vary $\eps_\alpha$ (x-axes) and the dimension of the node embedding (rows). From left to right, the columns show: (1) pairwise global clustering coefficient, (2) pairwise local clustering coefficient, (3) a multigraph clustering coefficient, and (4) the Robins-Alexander clustering coefficient from the bipartite representation. Bands indicate the maximum and minimum values over 25 trials. Spatial information and degrees are shared across trials.}
    \label{fig:cc-figure}
\end{figure}

\subsection{Why global and local projected clustering coefficients decrease}
\label{sec:clustering-scenario}
We now return to the scenario that initially left us puzzled. Both the local and global clustering coefficient of the pairwise graph $C$ and $\bar{C}$ decrease for large values of $\alpha$. 

Figure~\ref{fig:cc-example} shows a detailed example of how the local clustering of a single node can decrease despite an increase in triangles. 
The left column (panels a,c,e) shows how the nodes $u$ and $v$ are generating hyperedges for $\alpha=1.9$ while the right column (panels b,d,f) shows this process for $\alpha=2$ for our model.
Note the that pairwise edge between $v$ and $u$ in panel (a) is formed when running DBSCAN on the node $v$ as the node $u$ is an outlier or boundary point.

The second row (panels c-d) shows the unweighted pairwise projections of each of these hypergraphs.
Panel (d) has the new pairwise edges relative to panel (c) highlighted in black.
These new pairwise neighbors add both triangles and wedges to the node $u$.
In the last row (panels e and f), we show $C_u$ and highlight edges in unclosed wedges centered on the node $u$. We get an unclosed wedge for any combination of left vertices and right vertices in panel (f). 
So there are $3\times3=9$ new wedges centered on the node $u$.
Moreover, in this example the clustering coefficients $\bar{C}, C, CC_4$ also decrease. This shows how adding triangles with a clique expansion in the project graph can, ironically, introduce even more wedges in the graph.

\begin{figure}[tp]
    \centering
    \includegraphics[width=0.75\linewidth]{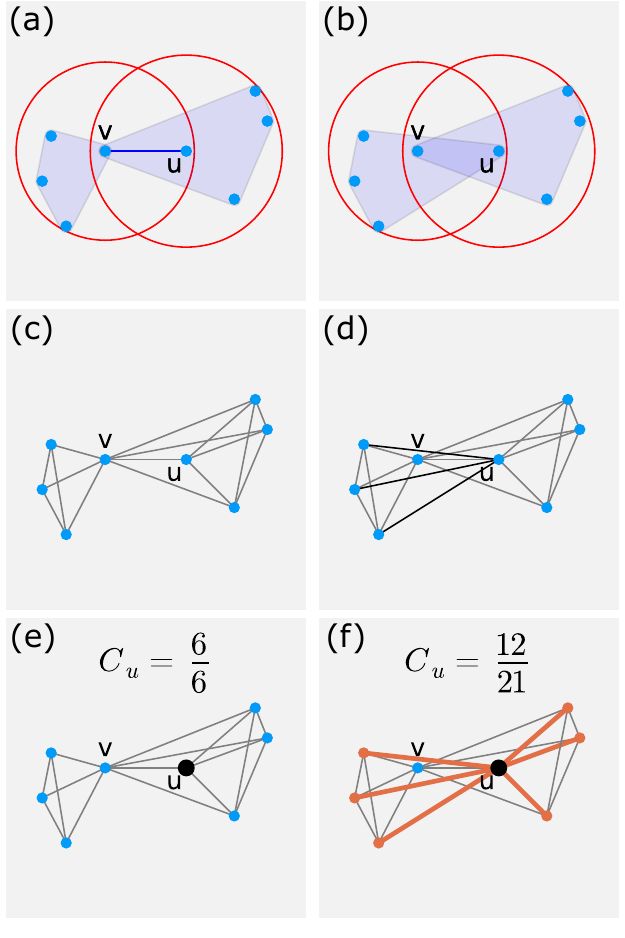}
    \caption{Figure comparing the local clustering coefficient of the node $u$, $C_u$, for two pairwise projections from instances of our model. The first row shows the hyperedges generated by our model. The second row shows the pairwise projections from row 1, highlighting the different edges in black in the right column. The third row highlights all edges that participate in a wedge centered about the node $u$. There are none for the left column but 9 wedges in the right column. $C_u$ is displayed as a ratio of triangles to wedges centered on the node $u$.}
    \label{fig:cc-example}
\end{figure}


\section{Diffusion models in spatial hypergraphs do not show higher-order effects}
\label{sec:diffusion}



\begin{figure}[p]
    \centering
    \begin{subfigure}[b]{0.32\textwidth}
        \centering
        \includegraphics[width=\textwidth]{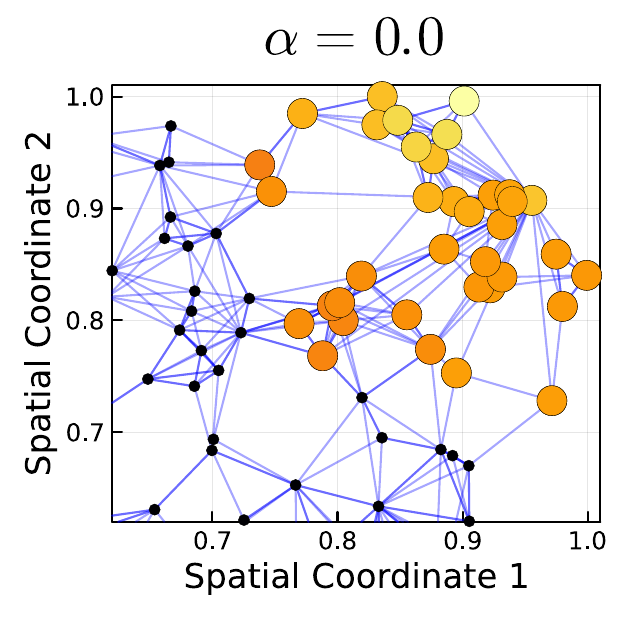}
        \label{fig:ppr-alpha-0}
    \end{subfigure}
    \begin{subfigure}[b]{0.32\textwidth}
        \centering
        \includegraphics[width=\textwidth]{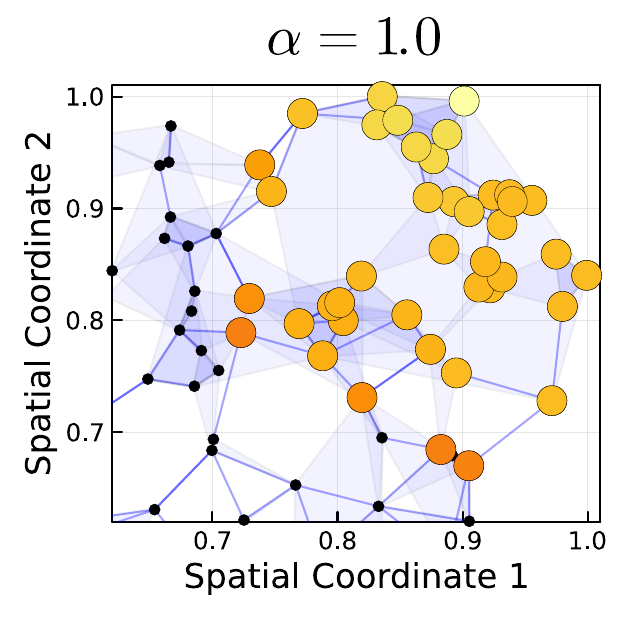}
        \label{fig:ppr-alpha-1}
    \end{subfigure}
    \begin{subfigure}[b]{0.32\textwidth}
        \centering
        \includegraphics[width=\textwidth]{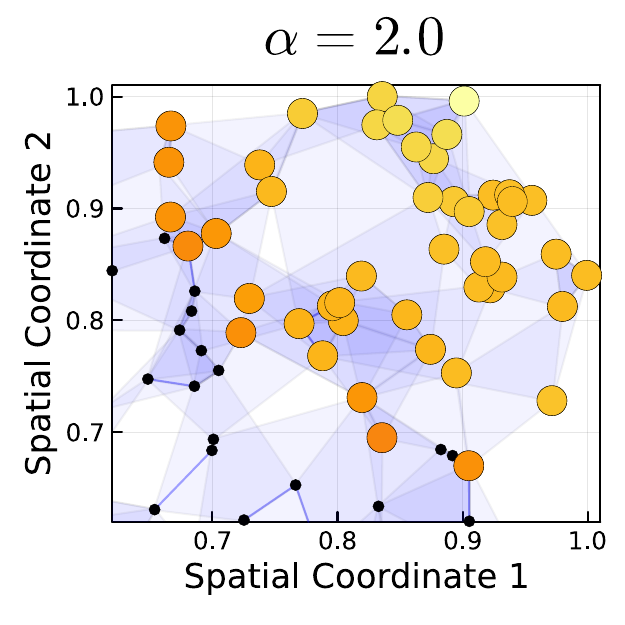}
        \label{fig:ppr-alpha-2}
    \end{subfigure}
    \begin{subfigure}[b]{0.75\textwidth}
        \centering
        \includegraphics[width=\textwidth]{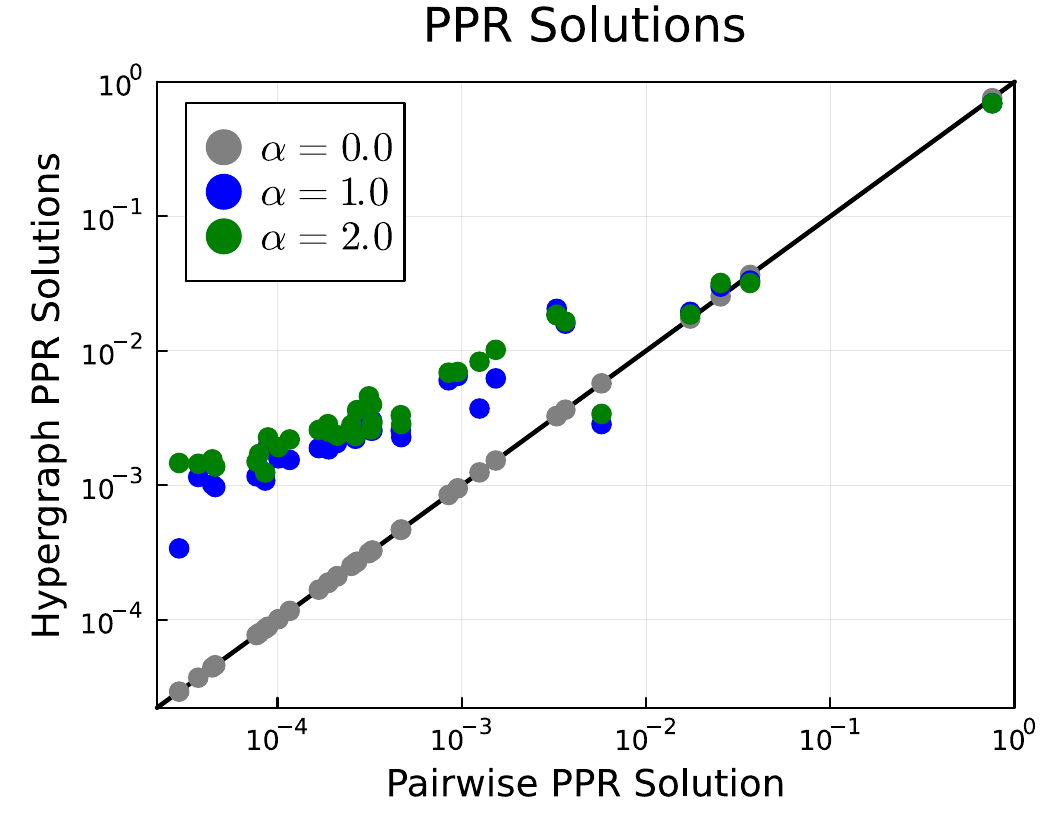}
        \label{fig:ppr-solutions}
    \end{subfigure}
    \caption{Seeded and sparse PageRank solutions in our spatial hypergraph model show only minor differences as we vary the amount of higher-order structure. Since PageRank is a diffusion-based method, this is expected as diffusion on the underlying spatial points dominates the effects. The first row shows when seeded on solution for the top-rightmost node for $\alpha=0,1,2$. The colors are chosen based on a log scale and black nodes did not meet the sparsity criteria for inclusion in the solution. The second plot shows a scatter plot of how the  solutions for $\alpha=1,2$ compare to the pairwise case $\alpha=0$. The black line would indicate exact agreement. Here, we see some small differences, but they largely reflect difference in the precise values computed rather than the relative ordering of behavior. }
    \label{fig:ppr-diffusions}
\end{figure}

Next, we use our model to study diffusion. In comparison to clustering coefficients, we do not expect to see large changes in the behavior of diffusion for our graph model. This is because the same spatial substrate underlies the graph as we vary $\alpha$ to interpolate between pairwise and higher order effects. Consequently, the physics of planar diffusion dominates the choice of higher-order or pairwise model. That said, we still see small differences between the two models in our study.

To perform the study, we generate a random set of $n=500$ points in the $[0,1]$ region and a fixed set of degrees for each node sampled from a log normal distribution with mean $\log(3)$ and variance $1$. (So the mean degree should be around $3.5$). Then we created instances of the graph as we varied $\alpha$. On each graph, we solved a seeded PageRank problem as an instance of a higher-order diffusion. We used the model from~\cite{Liu-2021-localhyper}, although we'd expect similar results from other hypergraph generalizations of PageRank such as~\cite{Li-Hypergraph-Pagerank,Takai2020}. The algorithm we pick is a strongly local and sparsity promoting PageRank method that we choose because it might encourage slightly more differences among the models than global seeded PageRank solutions. The specific parameters we used were a $2$-norm loss, $\kappa=0.0001$ (this causes the solution to grow away from the seed), $\gamma=1$ (this reflects the ``width'' around the seed), and $\rho=0.5$ (an approximation tolerance). Then we pick a seed in a corner of the graph and zoom in on the region identified by the seeded PageRank vector. 

The results when using $\alpha=0,1,2$ are shown in Figure~\ref{fig:ppr-diffusions}. These show that the behavior of the diffusion propagates radially away from the seed vector. There are some small differences in the propagation, especially around the boundary where the solution entries are small and the sparsity truncation changes which entries are truncated slightly. On the other hand, we see broad agreement in terms of element magnitude among the cases. We compare the values more closely in the scatterplot which shows that there are differences in the values, but the relative ordering is largely preserved with the pairwise case. 

Consequently, and as expected, this model shows little difference on a diffusion computation.


\section{Pairwise vs Higher-Order Epidemics}
\label{sec:epidemics}

In our final case study, we illustrate the utility of our spatial hypergraph model for understanding epidemics on hypergraphs. 

While higher-order contagion is relatively new compared to pairwise epidemics, there are a number of significant differences between pairwise and higher-order spreading~\cite{iacopini2019simplicial,higham2021epidemics,li2021contagion,landry2020effect}. In particular, there have been a number of conflicting results on the relevance of higher-order effects in spreading. 
On the one hand, epidemic spread is \emph{inherently} a pairwise behavior in which a real or virtual pathogen spreads from one individual to another in an infection event.
On the other hand, pathogens spread via airborne routes have obvious group-relevant interactions~\cite{Lu2020}. 
Theoretical and empirical studies on these findings have been mixed.
As shown by~\cite{higham2021epidemics}, without strong hyperedge-dependent infection effects, hyperedge transmission models reduce to weighted pairwise transmission models.
Studies of human mobility and SARS-CoV-2 showed that super-spreading and the associated group interactions were key routes of transmissions~\cite{chang2021mobility,Dixit2023}. Additional theoretical studies show bistable~\cite{iacopini2019simplicial} parameter regions in epidemics with simplicial complexes.

We use our model to study the impact of group-level or higher-order spreading in epidemics. Our vision is to model an airborne pathogen where group-level behavior may be important. 
We make several simplifying assumptions. 
A single infectious node in a group is enough to infect any other node within that group.
This is in contrast to collective contagion in which some fraction of nodes must be infectious to enable group-level spreading.
Moreover, we scale the probability of a node becoming infected with both the number of infected nodes within that group as well as the size of the group in different ways. This is because the more infectious contacts a node has within a group, the more infectious aerosols would be produced. 
However, a joint contact among a large group requires more space. Moreover, ventilation standards in the US~\cite{ASHRAE2019} state air dilution rates that scale with the number of people. This dilution will be a key feature in our models.

\subsection{The epidemic model} 
We used a discrete time Susceptible-Infected-Recovered-Susceptible (SIRS) compartment model with an additional exogenous infection term.
At each time, $t$, a node can be in exactly one of three states: susceptible, infected, or recovered.
Recovered nodes are temporarily immune from infection and they lose immunity with probability $\delta$. Infected nodes transition to recovered with probability $\gamma$. 
Susceptible nodes become infected due to contacts with infectious individuals through an edge or hyperedge or an exogenous infection with probability $\eta$. We include this exogenous infection term because we model a small population embedded within a larger population that can drive infections through other means.  In terms of the interaction with the hypergraph, we view each hyperedge as a separate interaction the node experiences during a time period. Consequently, each edge or hyperedge represents a possible infection route. Let $v\in V$ and let $h\in H$ be a hyperedge containing $v$ that represents a group interaction. 
We set the infection probability for node $v$ from hyperedge $h$ at time $t$ to be 
\begin{equation}
    \beta_h^{(t)} = \text{Pr}(v \text{ infected by } h | v \text{ susceptible}) = \frac{1-(1-\beta)^{i_{h}^{(t)}}}{g(|h|)},
    \label{eq:hyper-beta}
\end{equation}
where $\beta$ denotes a baseline pairwise infection probability, $i_{h}^{(t)}$ denotes the total number of infected nodes in the hyperedge $h$ at time $t$, and $g$ is function that represents the impact of ventilation that depends on the total number of nodes in $h$.
The term $1-(1-\beta)^{i_{h}^{(t)}}$ represents the pairwise infection probability within the hyperedge $h$ for a susceptible node. 
We further assume independence among distinct hyperedges.
So two nodes can interact among several groups and each of those interactions can independently transmit infection.

We make three simple choices for $g(m)$ to represent various potential ventilation scenarios. We use $g(m)=1,\sqrt{m},m$.
The case of $g(m)=1$ corresponds to no ventilation for group interactions and hence infections are transmitted in a pairwise fashion within each hyperedge.
The case of $g(m)=m$ corresponds to a linear dilution due to improved airflow. The US ventilation standards~\cite{ASHRAE2019} provide ventilation rates per person, which should provide increased infectious aerosol dilution in large groups. We also study a low-ventilation scenario where $g(m) = \sqrt{m}$.

\subsection{Related work}

Many methods related to epidemic spreading often makes use of individual-level stochastic models or mean-field approximations of the continuous-time process~\cite{bodo2016sis,suo2018information,higham2021epidemics}.
While these approaches share similarities with discrete event simulations, there are some notable differences in the pairwise case regarding fine-scaled information and the impact of homogeneity assumptions on total infections~\cite{bansal2007individual,volz2011effects, grossmann2021heterogeneity}. 
For this reason, we make use of a discrete event simulator to more accurately model fine-scaled epidemic behavior. 
Thus where other efforts use a rate of infection in continuous time (and the ensuing non-linear term), we directly use probabilities for each discrete time step.  

The biggest difference in our approach is how we treat infections within hyperedges. 
Studies such as ref.~\cite{iacopini2019simplicial} designate simplices as distinct group interactions with special rules for when infection can be transmitted that depends on the number of infected nodes. 
They may also embed all pairwise edges induced by a simplex and treat group spreading as a separate mechanism for spreading from pairwise spread.
In particular, scaling the infection rate with the size of the hyperedge is uncommon. 
For instance, ref.~\cite{bodo2016sis} uses an infection rate that scales with $i_h^{(t)}$ and $\beta$ only but not the size of the hyperedge. 
The only exception we are aware of is ref.~\cite{higham2021epidemics}, which uses a partitioned model that allows a different function for each value of $m$, but their analysis is based on a continuous-time formulation and shows that a mean-field approximation can be reduced to a weighted pairwise model.

\subsection{Results from epidemics on our spatial hypergraph model}

Throughout our remaining experiments, we use $n=50,000$ node graphs in $d=2$ with the same log-normal degree distribution with parameters $\log(3)$, 1, to give a mean initial degree of 3.5. We use the same exogenous infection rate for all experiments, $\eta = 5/1,000,000$, which corresponds to one exogenous infection every 4 time-steps. We also use the same recovery parameter $\gamma = 0.1$ for all experiments, which corresponds to an expected infection time of 10 time-steps. We will vary the infectivity parameter $\beta$ and immunity parameter $\delta$ in the experiments.  We run the simulation for 2000 time steps. 

We simulate epidemics using a discrete event simulation to produce trajectories of the number of infected nodes.  Detailed pseudo-code for the SIRS model on a general hypergraph is outlined in Algorithm~\ref{alg:hyper-sirs}. A few sample epidemic trajectories from a pairwise graph are shown in Figure~\ref{fig:total-infections-explaination} (left) where we vary the number of initially infected nodes. These all show convergence to a steady state over the time history of the epidemic. 

As we report on the results for other epidemic parameters, we found that the average number of infected nodes over the last 1000 timesteps was a reliable quantity. We show a histogram of this value over a number of distinct simulations in Figure~\ref{fig:total-infections-explaination} (right). This shows that the \emph{maximum} difference between any simulation was around 100 infections. 

Next, we wanted to ensure that this was a reliable quantity for a larger range of epidemic parameters. We varied $\alpha$ for our graph construction and $\beta$ over a wide range and computed the largest difference in trailing infected nodes to verify that we did not find any large variance. These results are shown in Figure~\ref{fig:no-bistability} for the three different choices of $g$ we consider. Overall, they show small differences -- around 1\% of the total population. 
In particular, based on~\cite{iacopini2019simplicial}, we were worried about possible bistability regimes. These would have manifested as much large changes to the infections over the trials. (Each bin is a maximum difference over 80 distinct simulations.) In fact, the largest values occur right at the epidemic thresholds as we'll see in the next set of figures. The majority of datapoints are even more stable with differences of around 100 infected nodes or 0.2\% of the total population.

\begin{algorithm}[p]
\caption{Hypergraph SIRS Model}
\begin{algorithmic}[1]
\Function{HyperSIRS}{$H,g(\cdot),\beta,\gamma,\delta,\eta,t^{\text{max}}, I_0$} where $\beta,\gamma,\delta,\eta$ are the infection, recovery, waning immunity, exogenous infection probabilities, respectively, $I_0$ initial infected nodes and $g(\cdot)$ is the function modeling the ventilation scenario.
    \For{$v\in V$}
        \Comment{initialize states}
        \If{$v\in I_0$}
            \State $\boldsymbol{s[v]} \gets \text{Infected}$
        \Else
            \State $\boldsymbol{s[v]} \gets \text{Susceptible}$
        \EndIf
    \EndFor
    \For{$v \in V$}
        \Comment{initialized neighbor lists and with number of infected neighbors}
        \State $N[v] \gets [(h,w_{v,h}): v\in h\in H \text{ and } w_{v,h} = \text{number of infected nodes in h}]$
    \EndFor
    \For{$t=1:t^{\text{max}}$}
        \Comment{main loop}
        \For{$v\in V$}
            \If{$rand()<\eta$}
                \State $\boldsymbol{s[v]}\gets \text{Infected}$
            \ElsIf{$\boldsymbol{s[v]}=\text{Susceptible}$}
                \For{$(h,w_{v,h}) \in N[v]$}
                    \If{$w_{v,h}>0$}
                        \State $\beta_h \gets \frac{1-(1-\beta)^{w_{v,h}}}{g(|h|)}$
                        \If{$rand()<\beta_h$}
                            \State $\boldsymbol{s[v]}\gets \text{Infected}$
                        \EndIf
                    \EndIf
                \EndFor
            \ElsIf{$\boldsymbol{s[v]}=\text{Infected}$}
                \If{$rand()<\gamma$}
                    \State $\boldsymbol{s[v]}\gets \text{Recovered}$
                \EndIf
            \ElsIf{$\boldsymbol{s[v]}=\text{Recovered}$}
                \If{$rand()<\eta$}
                    \State $\boldsymbol{s[v]}\gets \text{Susceptible}$
                \EndIf
            \EndIf
        \EndFor
        \For{$v \in V$}
            \Comment{update weights and record stats}
            \State $N[v] \gets [(h,w_{v,h}): v\in h\in H \text{ and } w_{v,h} = \text{number of infected nodes in h}]$
            
    \EndFor
    \EndFor
\EndFunction
\end{algorithmic}
\label{alg:hyper-sirs}
\end{algorithm}

\begin{figure}[tp]
    \centering
    \begin{subfigure}[h]{0.49\textwidth}
        \centering
        \includegraphics[width=\textwidth]{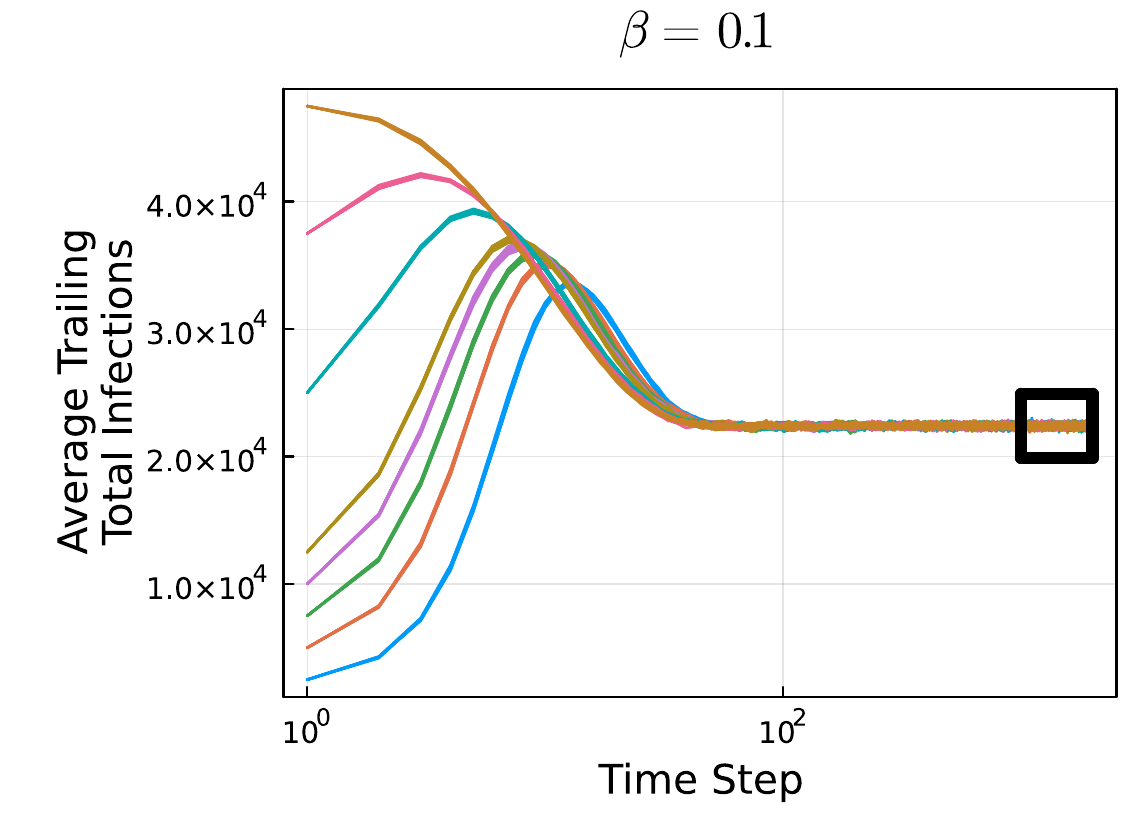}
        \label{fig:total-infections-explaination-1}
    \end{subfigure}
    \begin{subfigure}[h]{0.44\textwidth}
        \centering
        \includegraphics[width=\textwidth]{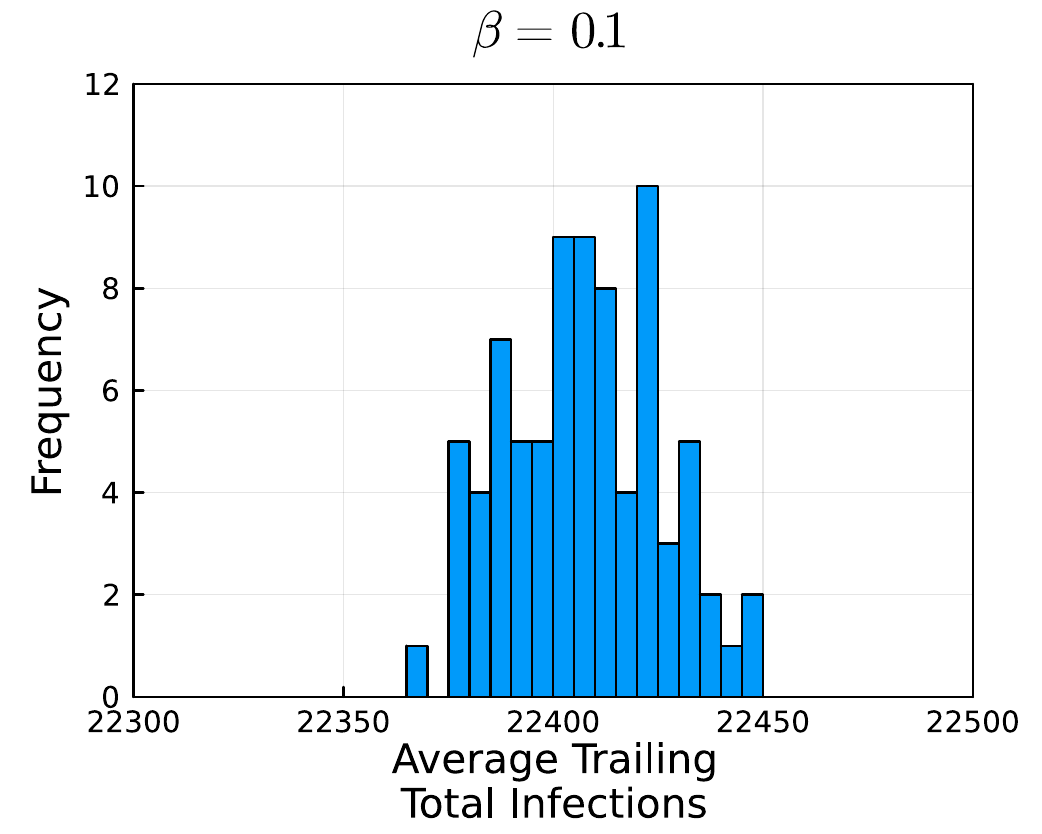}
        \label{fig:total-infections-explaination-2}
    \end{subfigure}
    \caption{(Left) An example of the infected node trajectories on a pairwise graph ($\alpha=0$) with $n=50000$, $d=2$. Despite large changes to the fraction of initially infected nodes, these show convergence to a steady state. The black box indicates the last 1000 times steps, over which we compute the average number of infected nodes. (Right) This shows a histogram over the trailing infected nodes for a number of different samples showing that the average trailing number of infected nodes is a reliable quantity. Other epidemic parameters are fixed at $\gamma=\delta=0.05$.}
    \label{fig:total-infections-explaination}
\end{figure}

\begin{figure}[tp]
    \centering
    \begin{subfigure}[b]{0.3\textwidth}
        \centering
        \includegraphics[width=\textwidth]{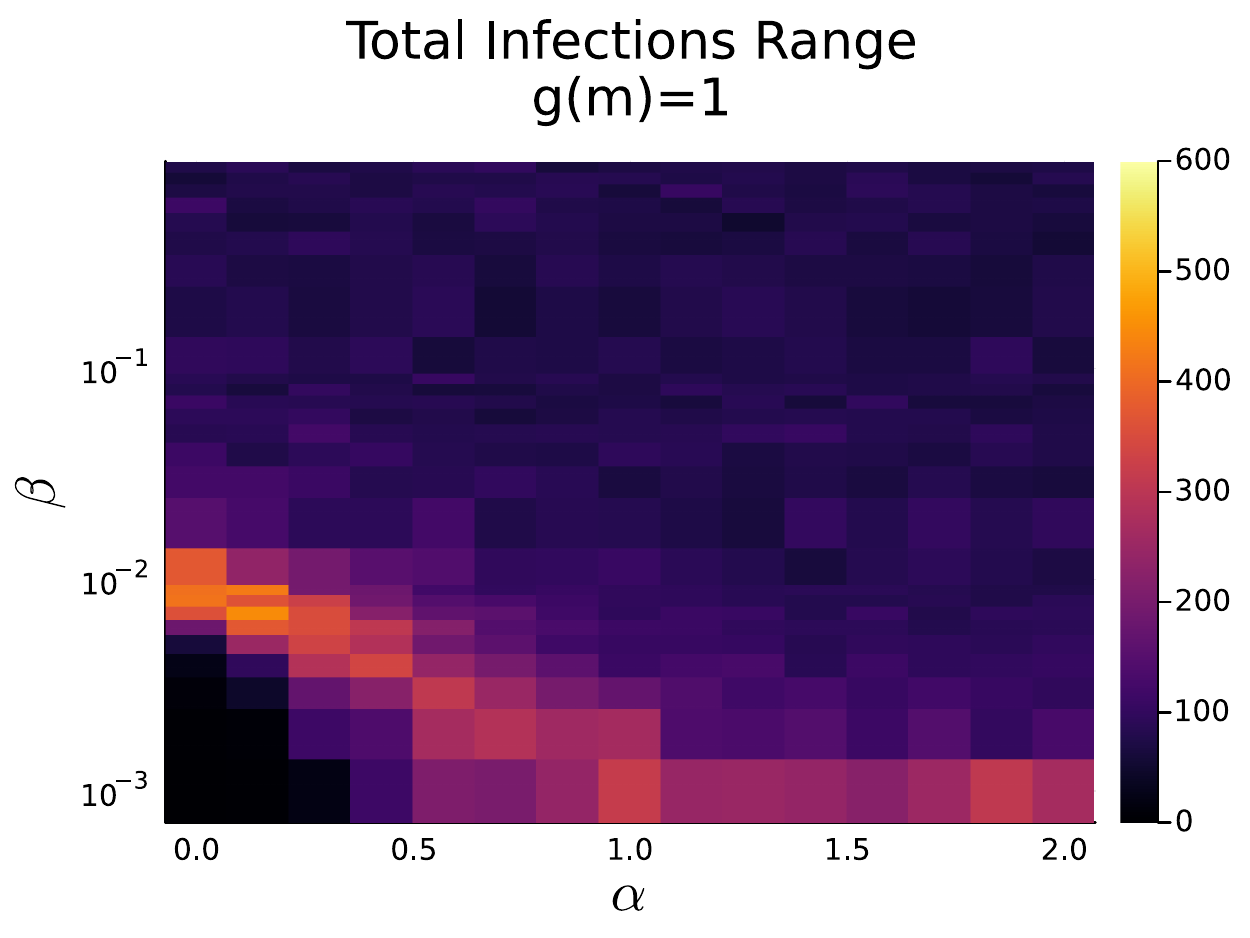}
        \label{fig:no-bistability-1}
    \end{subfigure}
    \begin{subfigure}[b]{0.3\textwidth}
        \centering
        \includegraphics[width=\textwidth]{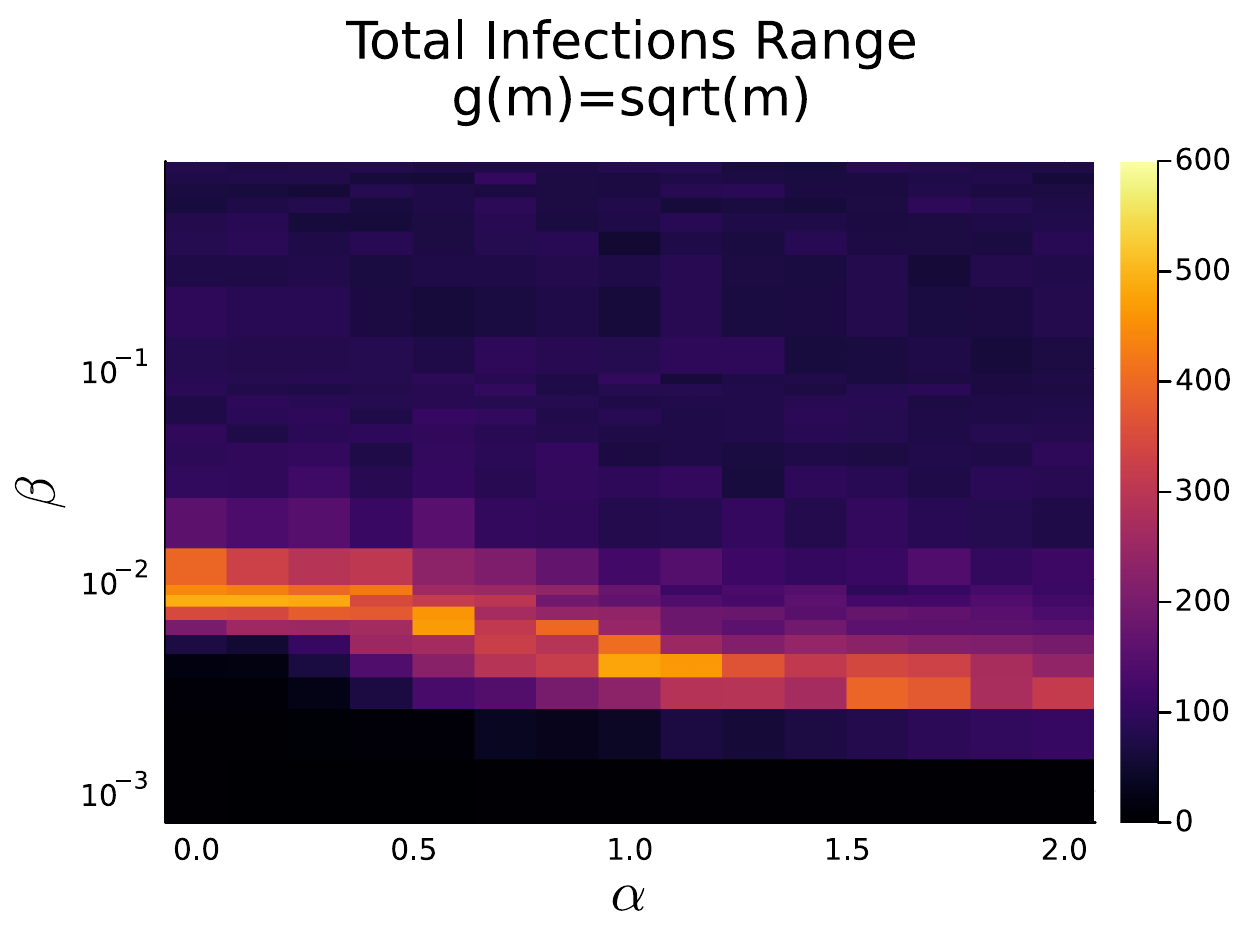}
        \label{fig:no-bistability-2}
    \end{subfigure}
    \begin{subfigure}[b]{0.3\textwidth}
        \centering
        \includegraphics[width=\textwidth]{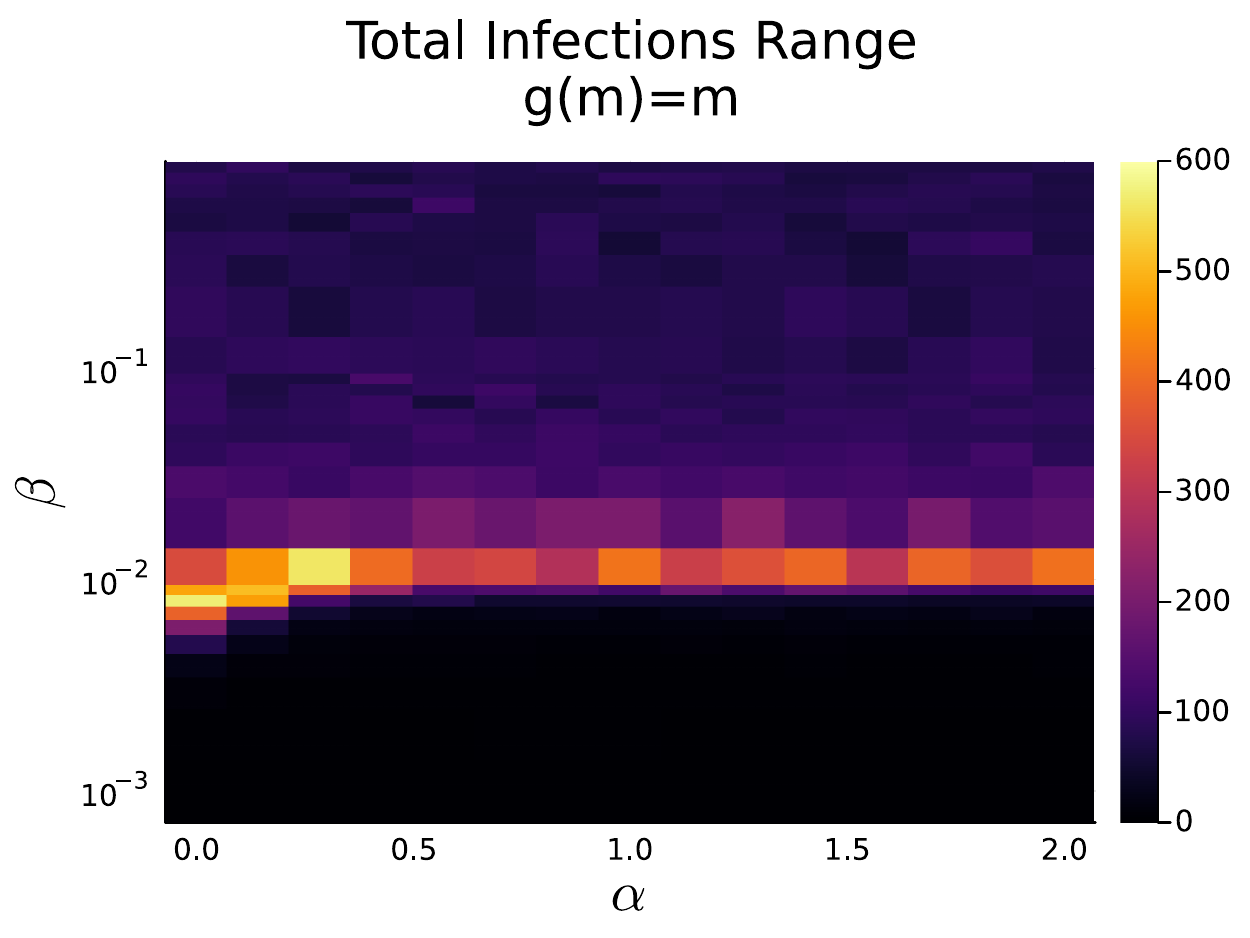}
        \label{fig:no-bistability-3}
    \end{subfigure}
    \caption{The average trailing number of infected nodes is a reliable quantity and we are not in a bistable regime for our epidemic simulations. In this case, epidemics are seeded using an initial number of infected nodes of $5\%, 10\%, 15\%, 20\%,25\%, 50\%, 75\%, $and $95\%$ of the graph. Other epidemic parameters are fixed with $\gamma=\delta=0.05$. Each bin corresponds to 80 simulations (10 per each initial infected fraction). We compute the range of total infections for those simulations in the steady state. The maximum difference is less than 600 infections (which reflects a worse-case fluctuation of about 1\% of nodes) right around the epidemic threshold.}
    \label{fig:no-bistability}
\end{figure}

\begin{figure}[tp]
    \centering
    \begin{subfigure}[b]{0.32\textwidth}
        \centering
        \includegraphics[width=\textwidth]{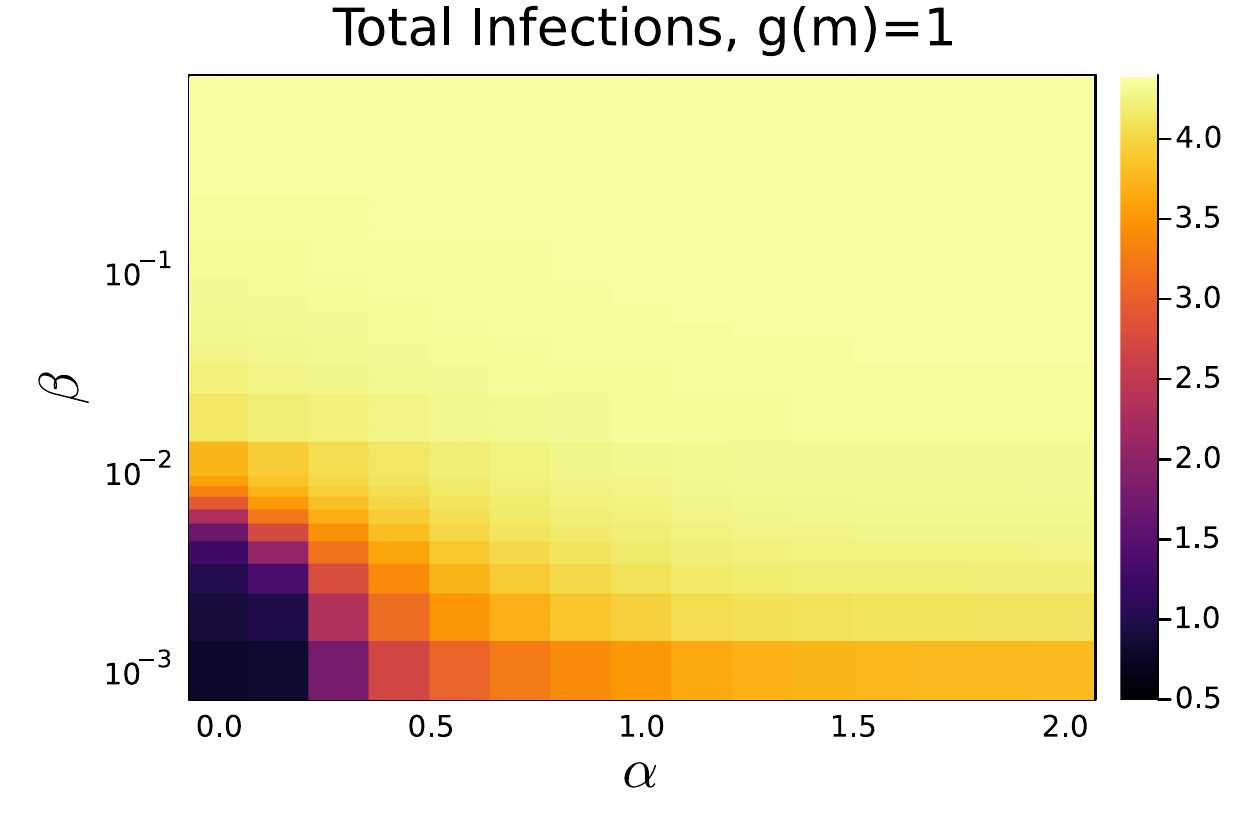}
        \label{fig:total-infections-heatmap-pairwise}
    \end{subfigure}
    \begin{subfigure}[b]{0.32\textwidth}
        \centering
        \includegraphics[width=\textwidth]{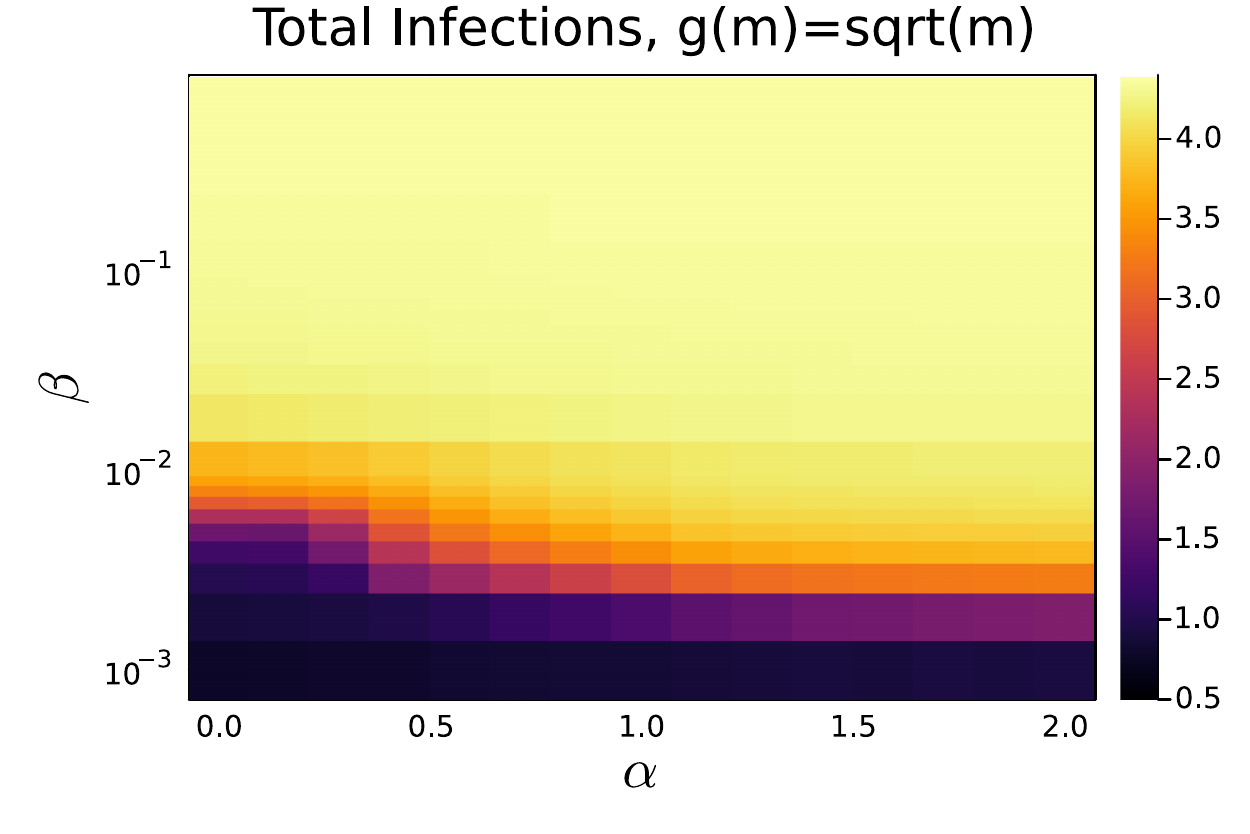}
        \label{fig:total-infections-heatmap-sqrt}
    \end{subfigure}
    \begin{subfigure}[b]{0.32\textwidth}
        \centering
        \includegraphics[width=\textwidth]{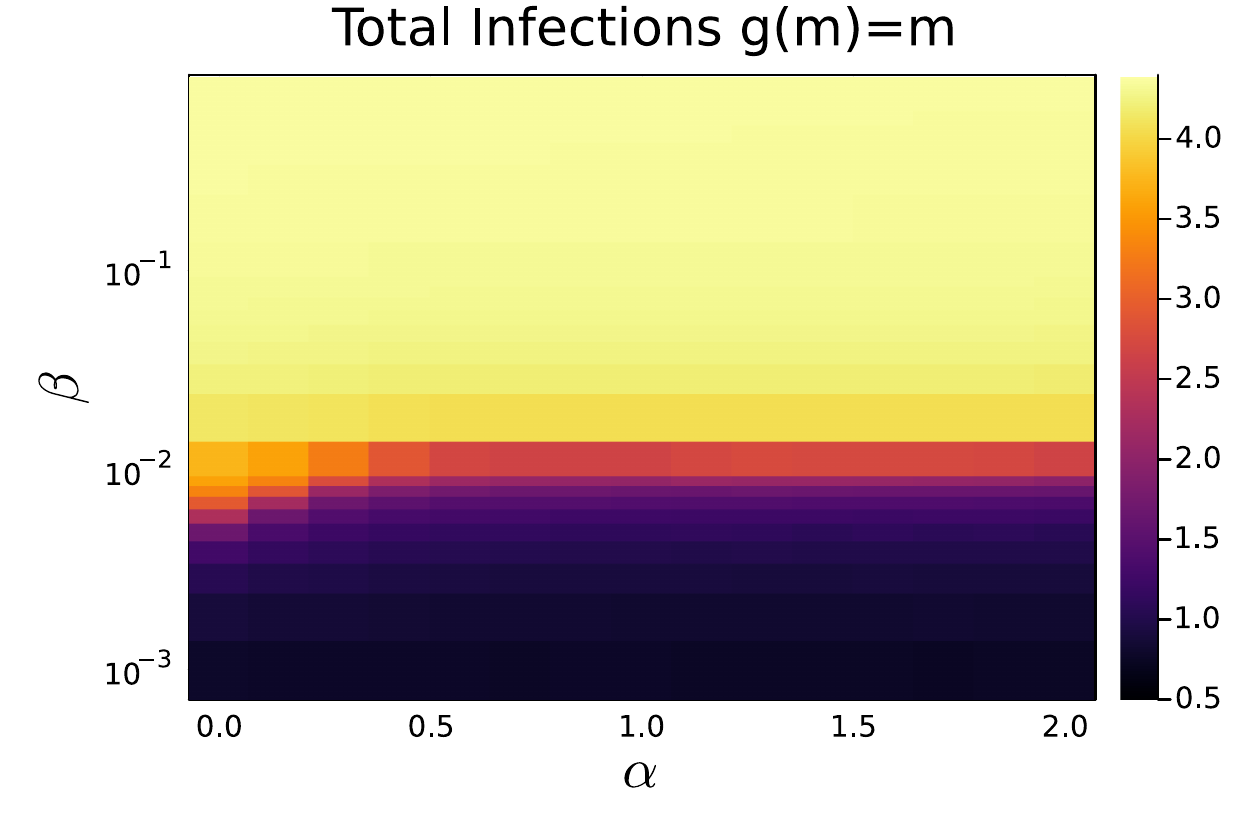}
        \label{fig:total-infections-heatmap-linear}
    \end{subfigure}
    \hfill
    \begin{subfigure}[b]{0.32\textwidth}
        \centering
        \includegraphics[width=\textwidth]{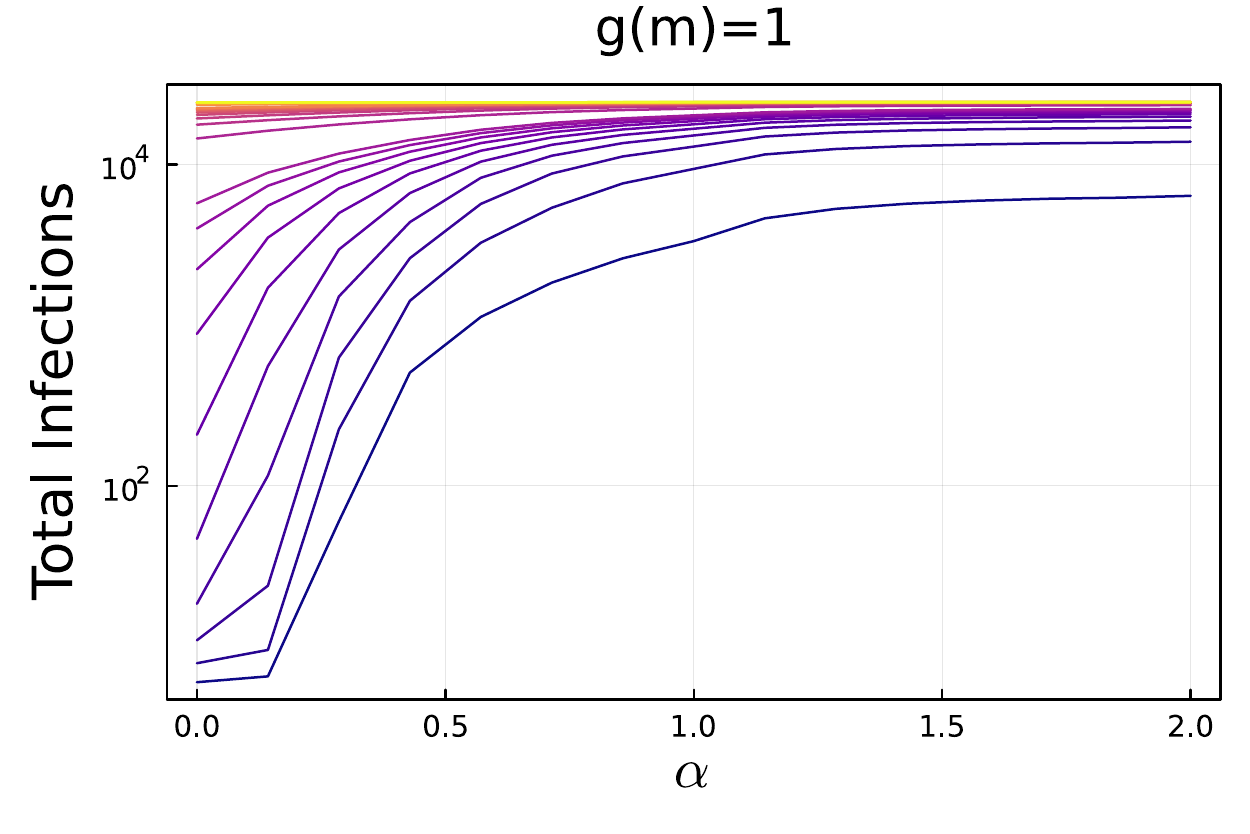}
        \label{fig:total-infections-trajectory-pairwise}
    \end{subfigure}
    \begin{subfigure}[b]{0.32\textwidth}
        \centering
        \includegraphics[width=\textwidth]{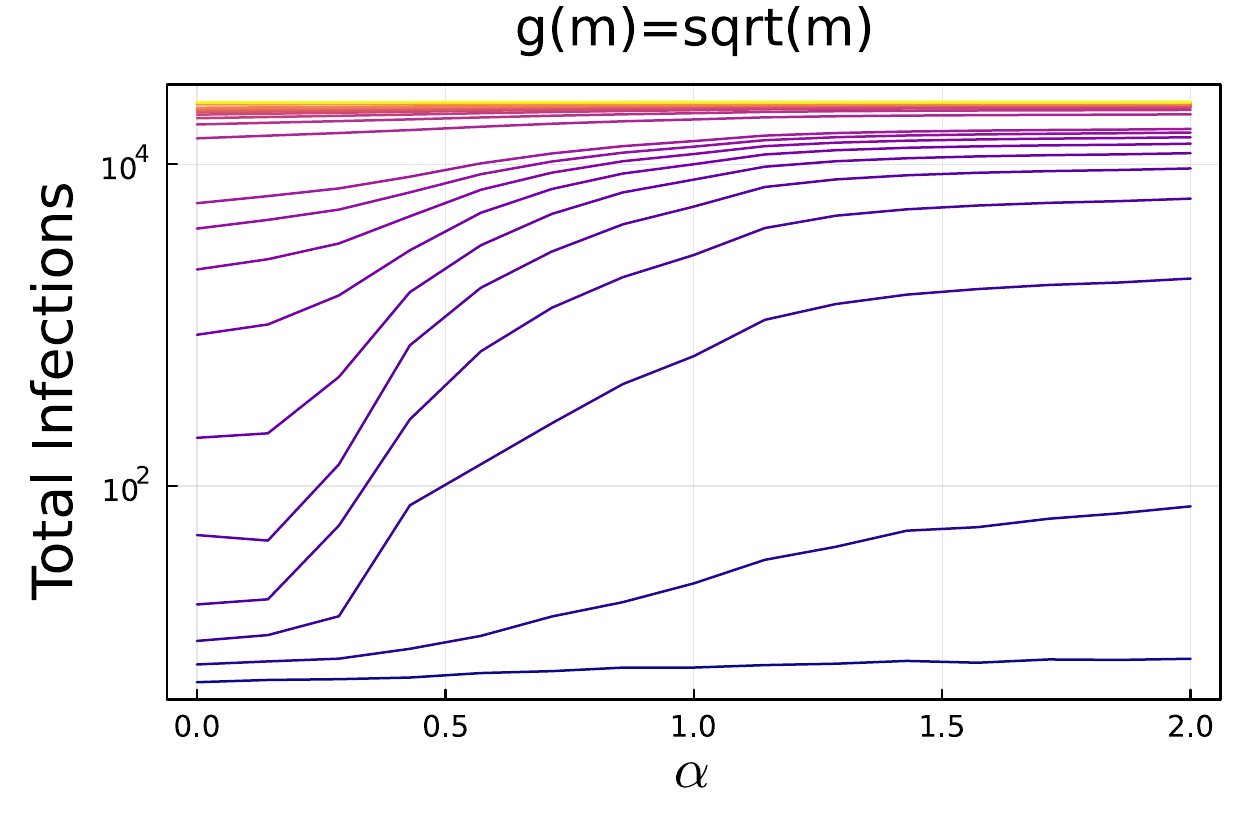}
        \label{fig:total-infections-trajectory-sqrt}
    \end{subfigure}
    \begin{subfigure}[b]{0.32\textwidth}
        \centering
        \includegraphics[width=\textwidth]{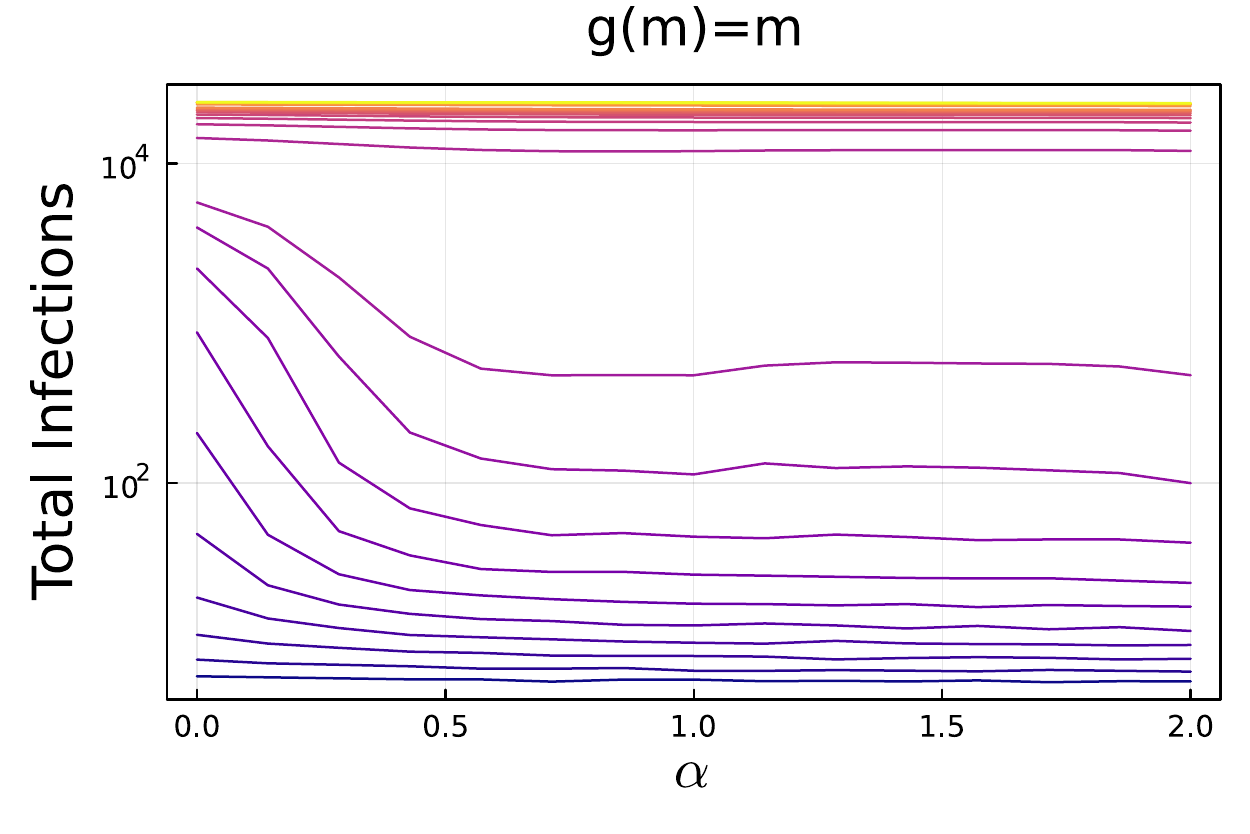}
        \label{fig:total-infections-trajectory-linear}
    \end{subfigure}
    \caption{
    The emulated effects of ventilation and higher-order structure $\alpha$ can have non-intuitive effects on average trailing infected nodes. The top row shows infections as a heatmap in $\alpha$ (higher-order structure) and $\beta$ (infection probability) space while fixing $\gamma=\delta=0.05$. The entries are log-scaled, so 4 means $10^4$ average trailing infected nodes. The bottom row shows the same data as a set of lines where larger values of $\beta$ correspond with more red / yellow colors. The columns correspond to the emulated ventilation functions $g(m) = 1$, $g(m) = \sqrt{m}, g(m) = m$. Note that, in the absence of strong ventilation (left column), increasing $\alpha$ increases total infections. 
    In contrast, under linear ventilation (right column), increasing $\alpha$ decreases total infections 
    }
    \label{fig:total-infections}
\end{figure}

\subsection{Threshold behavior in the epidemics as the infection probability varies}
We next study the same experiment but with a focus on the actual quantity of the average trailing infected nodes. 
Figure~\ref{fig:total-infections} shows the average trailing infections as we vary the amount of higher-order structure in our model ($\alpha$) and the infection probability ($\beta$). 
As expected, each figure shows a clear epidemic threshold. As $\beta$ increases, we go from a small steady state to an outbreak with about half the graph. A second observation is that, as we emulate a greater amount of ventilation ($g(m) = \sqrt{m}$ or $g(m) = m$), we need larger infection probabilities as we transition the graph from pairwise to higher-order structure.

Perhaps the most interesting observation is that the impact of increasing $\alpha$ is coupled with ventilation. 
When $g(m)=1$ then each hyperedge represents a quadratic number of possible infection pathways. This corresponds to increasing the total number of edges, as we will see shortly (Figure~\ref{fig:projected-edge-volume}). Consequently, we \emph{expect} this to show that highly infected populations occur at lower infection probabilities, see e.g., ref~\cite{higham2021epidemics}. 
That this same impact occurs for $g(m) = \sqrt{m}$ is also expected by the same reasoning. This simply doesn't change the probabilities enough to mask the overall increase in effective edges. In contrast, setting $g(m) = m$ should show roughly constant behavior as a function of $\alpha$ by the same reasoning. We do not see this behavior. In the case of linear ventilation (right column) increasing $\alpha$ causes total infections to fall -- especially right around the threshold value of $\beta$. 
This indicates some coupling between ventilation and higher-order structure, which we will explore in the next section. 

\subsection{The impact of hyperedges varies with the epidemic parameters}
\label{sec:varying-impact}
Indeed, the relationship between our ventilation term $g(m)$ and the amount of higher-order structure is not straightforward. In our initial conference paper~\cite{Eldaghar-2024-spatial-hypergraphs}, we found that average trailing infections \emph{increases} with $\alpha$ whereas in the previous section we found that average trailing infections \emph{decreases} with $\alpha$. In a small surprise, the impact is extremely sensitive to the epidemic parameters. 
In the case of linear ventilation, interpolating to higher-order structure can case total infections to increase or decrease in a non-linear fashion.
We illustrate this while separately changing two different epidemic parameters, the infection probability $\beta$ and the waning immunity term $\delta$.
Figure~\ref{fig:hyperedge-effects-p1} shows average trailing total infections as we vary $\beta$ while fixing $\delta=0.01$ under linear ventilation. 
While the number of infections differ among those plots, they show dramatically different shapes depending on how much higher-order structure is present. 
Increasing $\alpha$ can cause total infections to decrease, increase, or produce non-linear mixed effects.
Similar effects are present when varying $\delta$ instead of $\beta$ in Figure~\ref{fig:hyperedge-effects-p2}. 
Note that column 3 of Figure~\ref{fig:hyperedge-effects-p1} is the same as column 1 of Figure~\ref{fig:hyperedge-effects-p2}.

In pairwise epidemics, the dominant eigenvalue is related to the epidemic threshold in the mean-field and often used as a proxy for epidemic strength~\cite{Chakrabarti2008,Prakash2011}.
The total edge volume (total edges in pairwise graphs) is related to epidemic thresholds in randomized networks~\cite{castellano2010thresholds}.
We compute both of these quantities in the pairwise projections to illustrate that the effect we are seeing cannot be explained by simple pairwise tools.
In order to do so, we compute a weighted clique expansion of generated hypergraphs where hyperedges are weighted using the ventilation term $g(m)$.
A single hyperedge of size $m$ is mapped to a clique on $m$ nodes with edge weights $1/g(m)$. We sum up the weights from all hyperedges on the same pair of nodes. 
We then compute the dominant eigenvalue $\lambda_1$ and sum of weighted degrees.
These are the leftmost columns of Figures~\ref{fig:projected-lambda-1} and~\ref{fig:projected-edge-volume} respectively.
In this case, increases in total infections are not due to changes in either the dominant projected eigenvalue of projected edge volumes. 

In terms of mechanisms underlying this effect, we note that the overall changes are modest with respect to the population size. However, they are reliable and repeatable. We were unable to identify any large properties or differences by studying the epidemic propagation in detail, or looking at which hyperedges were responsible, at least beyond similarly small differences. For instance, when we studied the fraction of infection transmissions along edges or hyperedges, we could see that higher-order edges were more likely to transmit infections in the $\delta=0.01$ and $\delta=0.02$ scenario compared with the $\delta=0.03$ scenario. But given that these edges account for a fairly small fraction of the overall transmissions in the highly ventilated case, we did not believe this finding to be mechanistically conclusive.  

\begin{figure}[tp]
    \centering
    \begin{subfigure}[b]{0.24\textwidth}
        \centering
        \includegraphics[width=\textwidth]{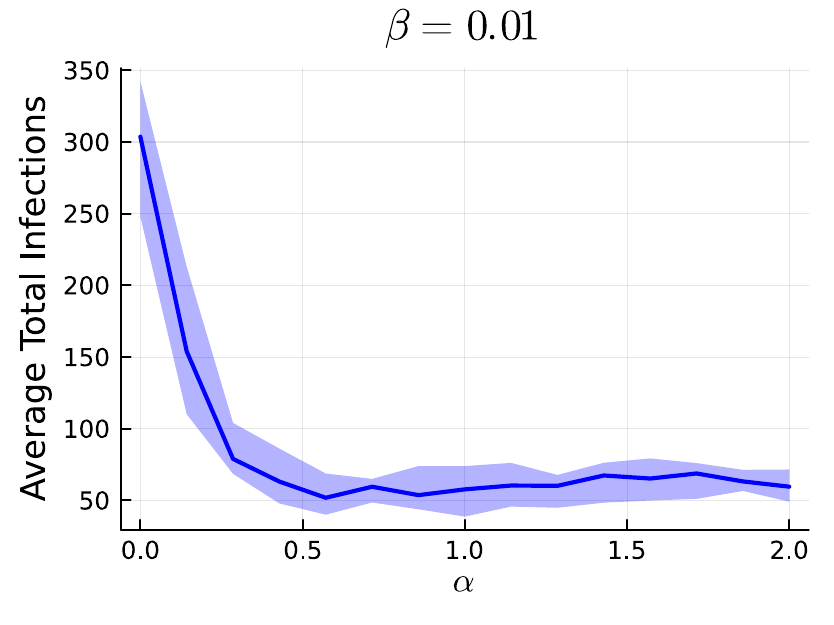}
        \label{fig:hyperedge-effects-p1-1}
    \end{subfigure}
    \begin{subfigure}[b]{0.24\textwidth}
        \centering
        \includegraphics[width=\textwidth]{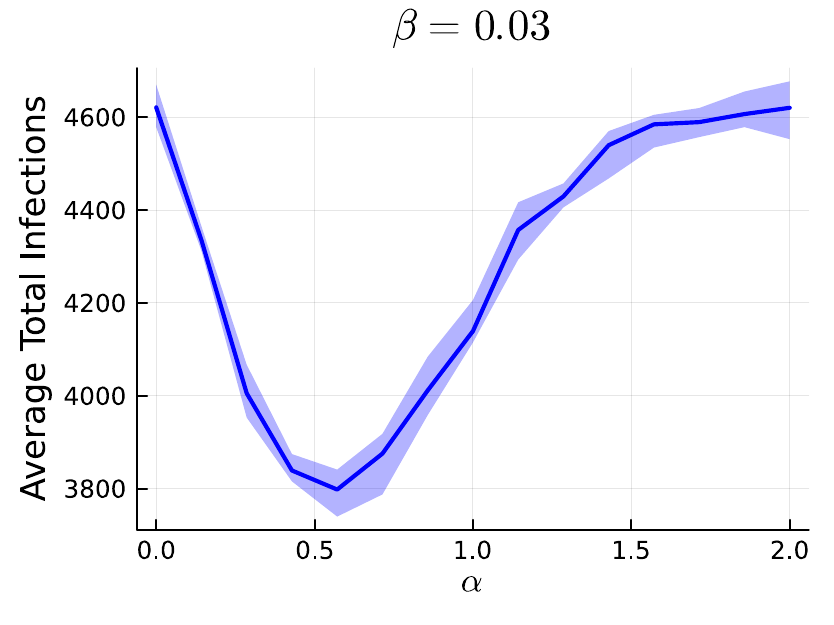}
        \label{fig:hyperedge-effects-p1-2}
    \end{subfigure}
    \begin{subfigure}[b]{0.24\textwidth}
        \centering
        \includegraphics[width=\textwidth]{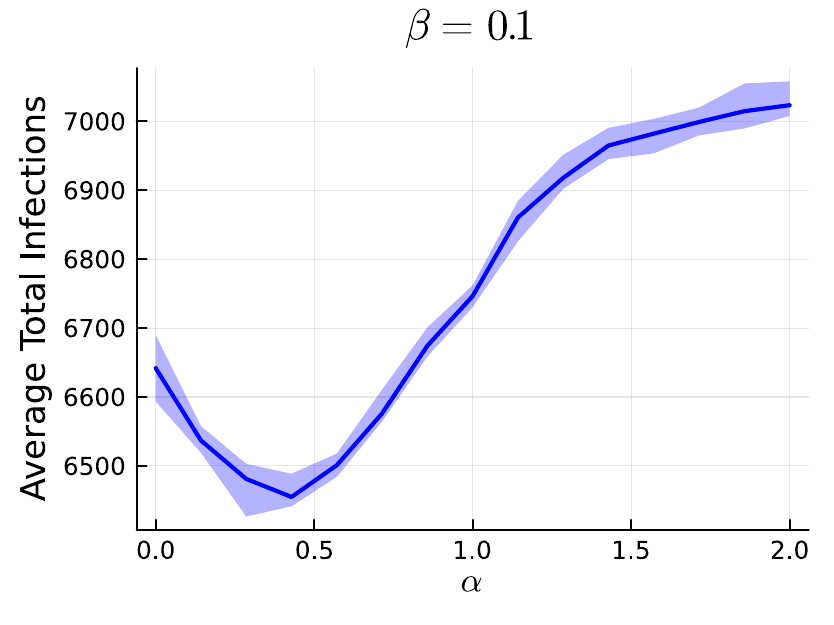}
        \label{fig:hyperedge-effects-p1-3}
    \end{subfigure}
    \begin{subfigure}[b]{0.24\textwidth}
        \centering
        \includegraphics[width=\textwidth]{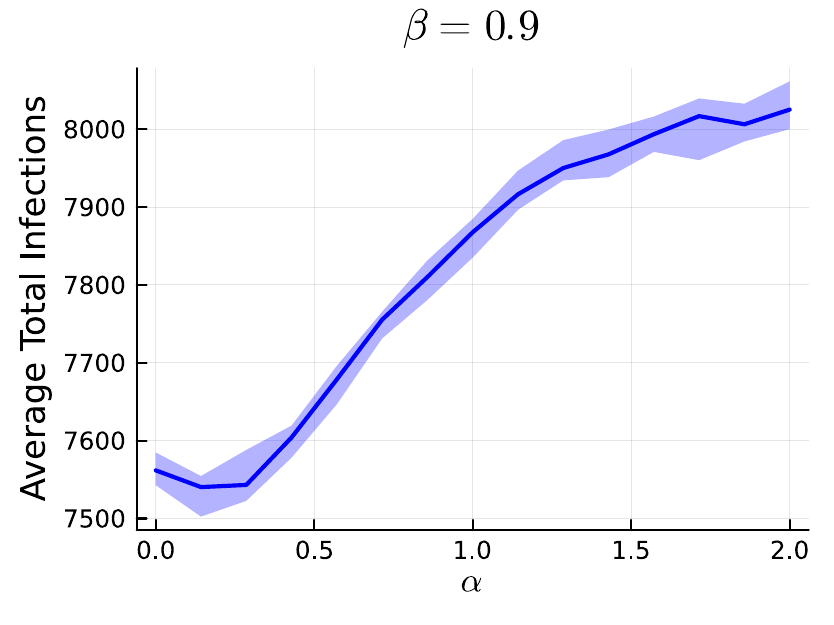}
        \label{fig:hyperedge-effects-p1-4}
    \end{subfigure}
    \caption{The behavior of the average trailing infected nodes metric can vary depending on epidemic parameters. From the left to right, we show what happens as we increase the value of $\beta$ for epidemic simulations for the waning immunity probability $\delta=0.01$ under linear ventilation $g(m) = m$ with $\gamma=0.05$. While average trailing infections differ among the plots, the impact of interpolating from a pairwise to higher-order is sensitive to epidemic parameters.}
    \label{fig:hyperedge-effects-p1}

    \centering
    \begin{subfigure}[b]{0.32\textwidth}
        \centering
        \includegraphics[width=\textwidth]{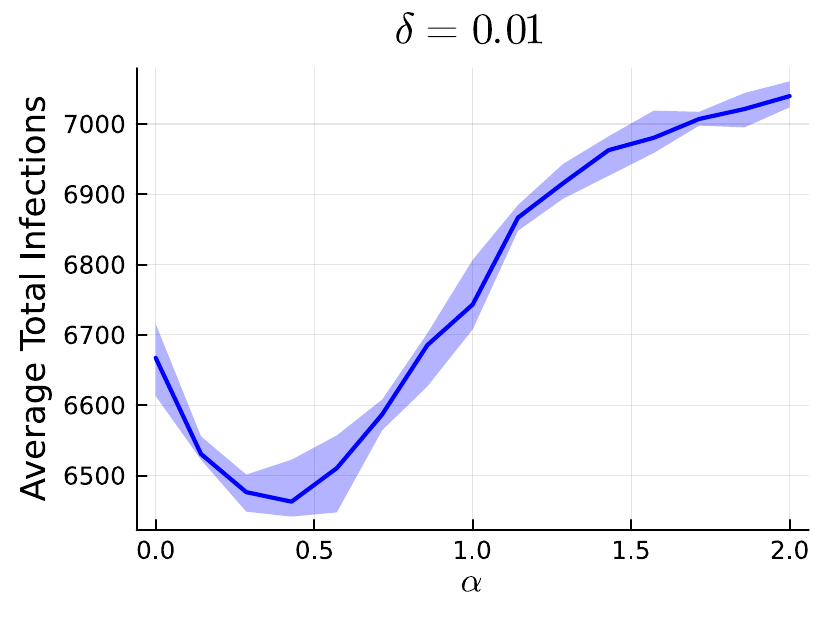}
        \label{fig:hyperedge-effects-p2-1}
    \end{subfigure}
    \begin{subfigure}[b]{0.32\textwidth}
        \centering
        \includegraphics[width=\textwidth]{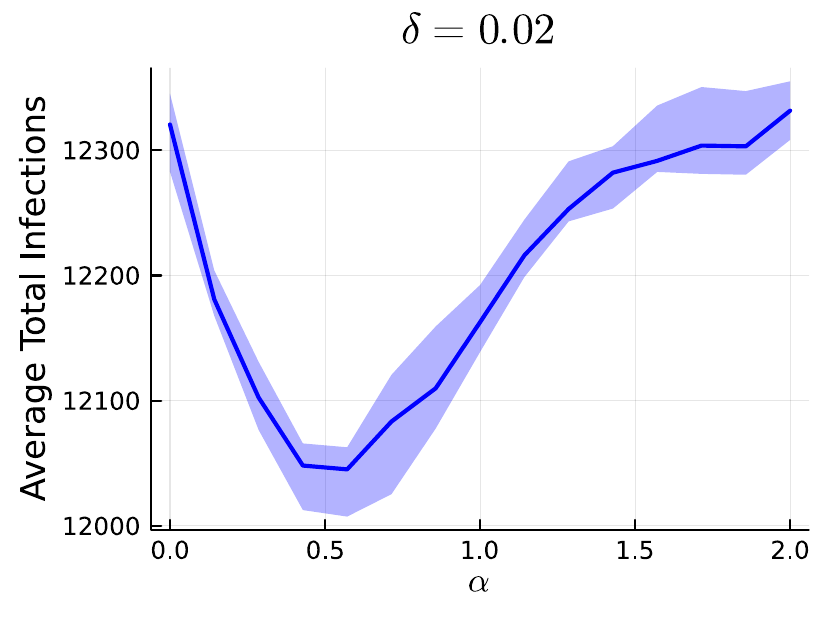}
        \label{fig:hyperedge-effects-p2-2}
    \end{subfigure}
    \begin{subfigure}[b]{0.32\textwidth}
        \centering
        \includegraphics[width=\textwidth]{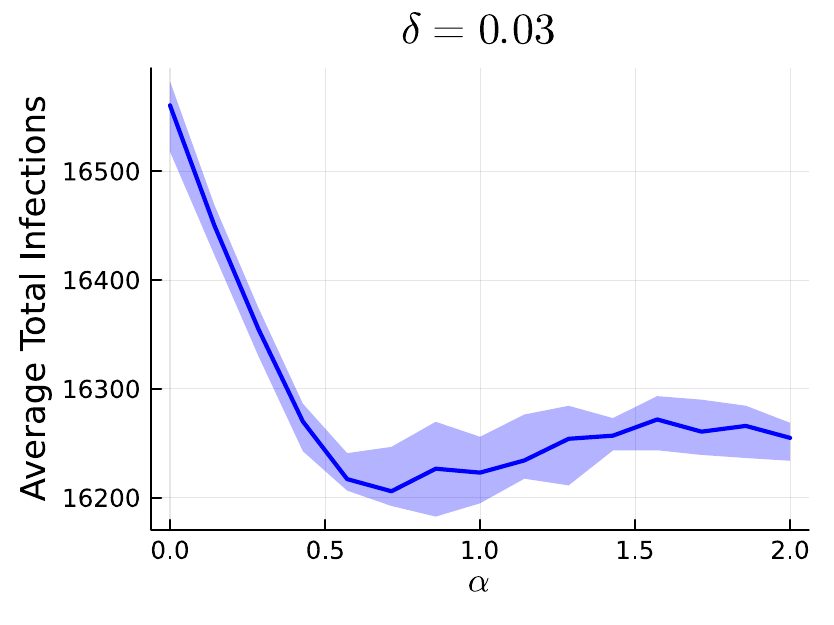}
        \label{fig:hyperedge-effects-p2-3}
    \end{subfigure}
    \caption{Modifying the immune waning term ($\delta$) can also alter the impact of hyperedges under linear ventilation $g(m) = m$. From left to right, we increase the parameter $\delta$ while fixing other epidemic parameters ($\gamma=0.05$ and $\beta=0.1$) and recording average trailing infected nodes. The impact of higher-order structure (large $\alpha$) can cause both growth and decay in the epidemic impact. Note that the leftmost figure is the same as column 3 of Figure~\ref{fig:hyperedge-effects-p1}.}
    \label{fig:hyperedge-effects-p2}

    \centering
    \begin{subfigure}[b]{0.32\textwidth}
        \centering
        \includegraphics[width=\textwidth]{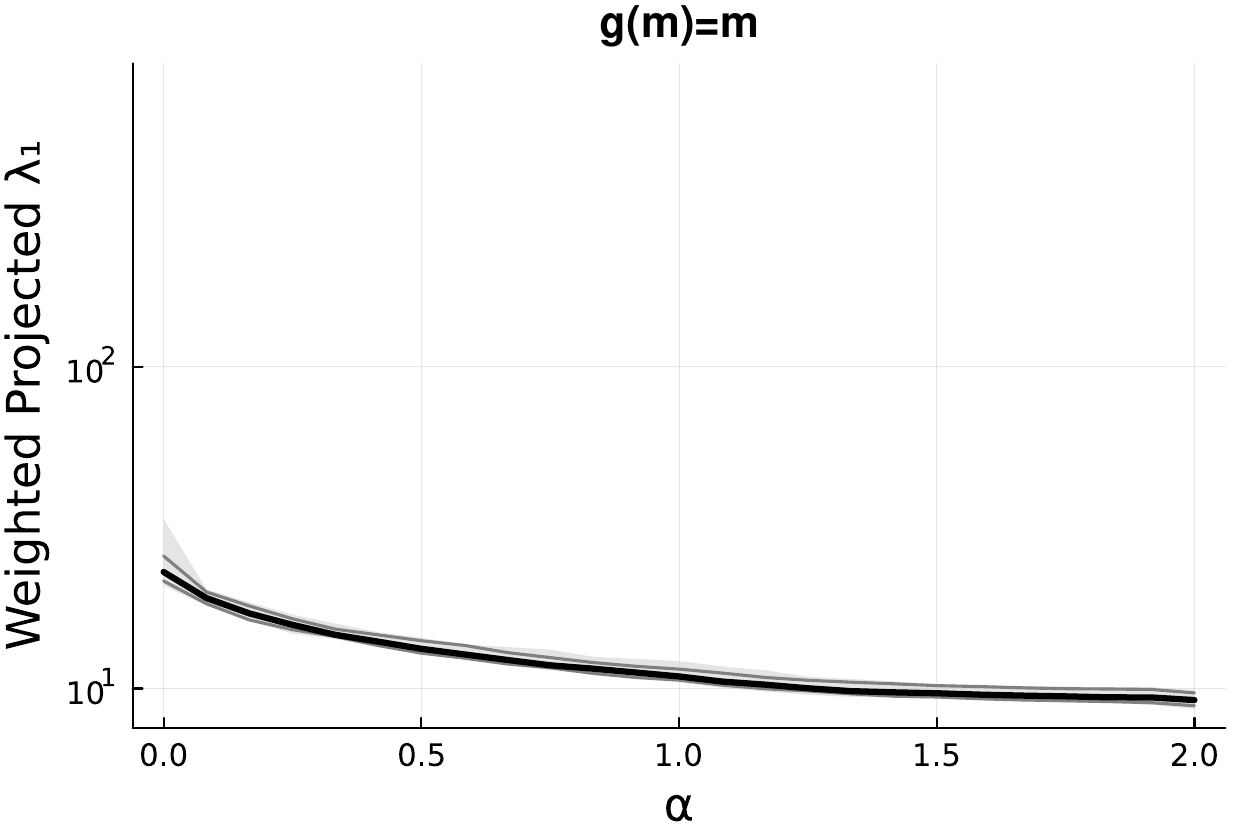}
        \label{fig:projected-lambda1-linear}
    \end{subfigure}
    \begin{subfigure}[b]{0.32\textwidth}
        \centering
        \includegraphics[width=\textwidth]{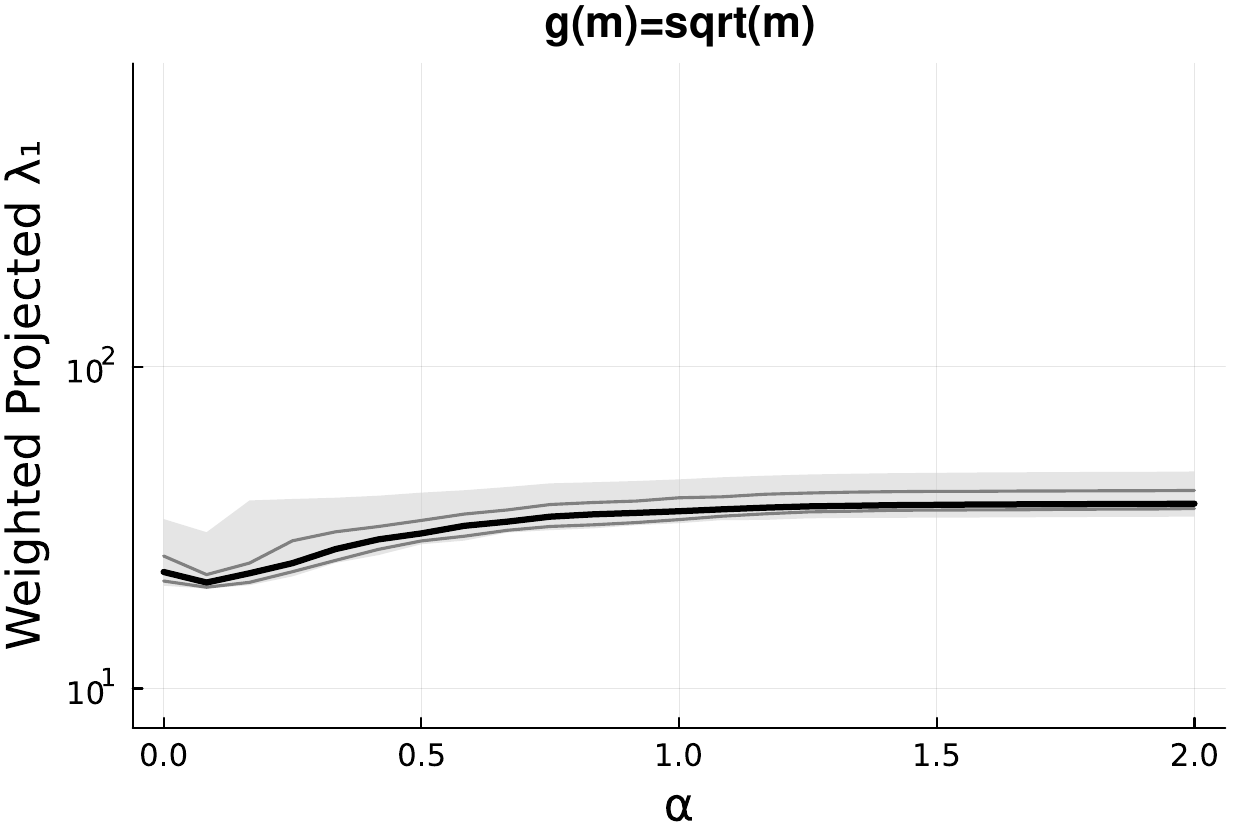}
        \label{fig:projected-lambda1-sqrt}
    \end{subfigure}
    \begin{subfigure}[b]{0.32\textwidth}
        \centering
        \includegraphics[width=\textwidth]{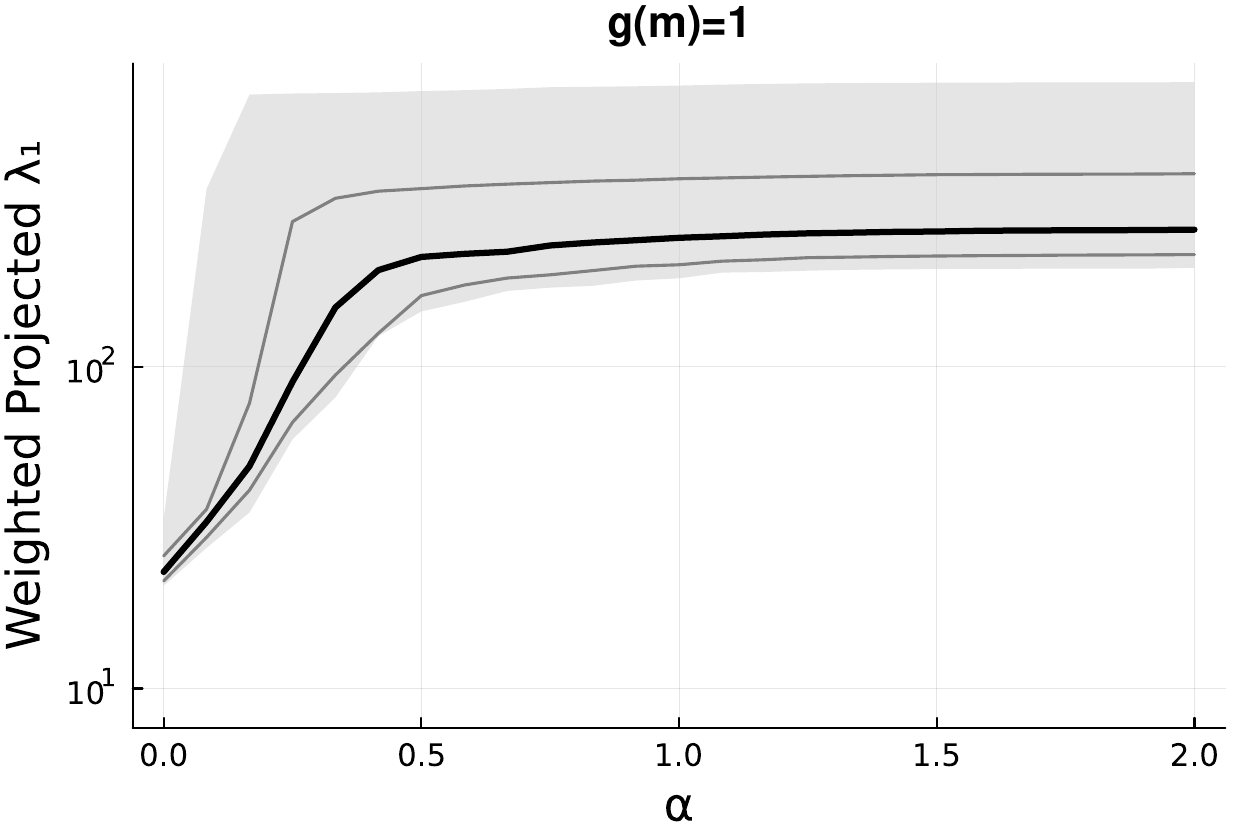}
        \label{fig:projected-lambda1-pairwise}
    \end{subfigure}
    \caption{The dominant eigenvalue of a weighted clique expansion for our spatial hypergraph model while weighing hyperedges according to the ventilation function, $g(m)$. A single hyperedge of size $m$ is projected to a clique with edge weights of $1/g(m)$. 
    Gray and black lines indicate the 10th, 50th, and 90th percentiles while the gray bands denote the max and min over 25 trials.
    Linear ventilation causes the pairwise dominant eigenvalue to decrease while other choices for $g(m)$ generally cause an increase. This would correspond to a weaker epidemic, which is not what is always found in the experiments from Figures~\ref{fig:hyperedge-effects-p1},~\ref{fig:hyperedge-effects-p2}.}
    \label{fig:projected-lambda-1}

    \centering
    \begin{subfigure}[b]{0.32\textwidth}
        \centering
        \includegraphics[width=\textwidth]{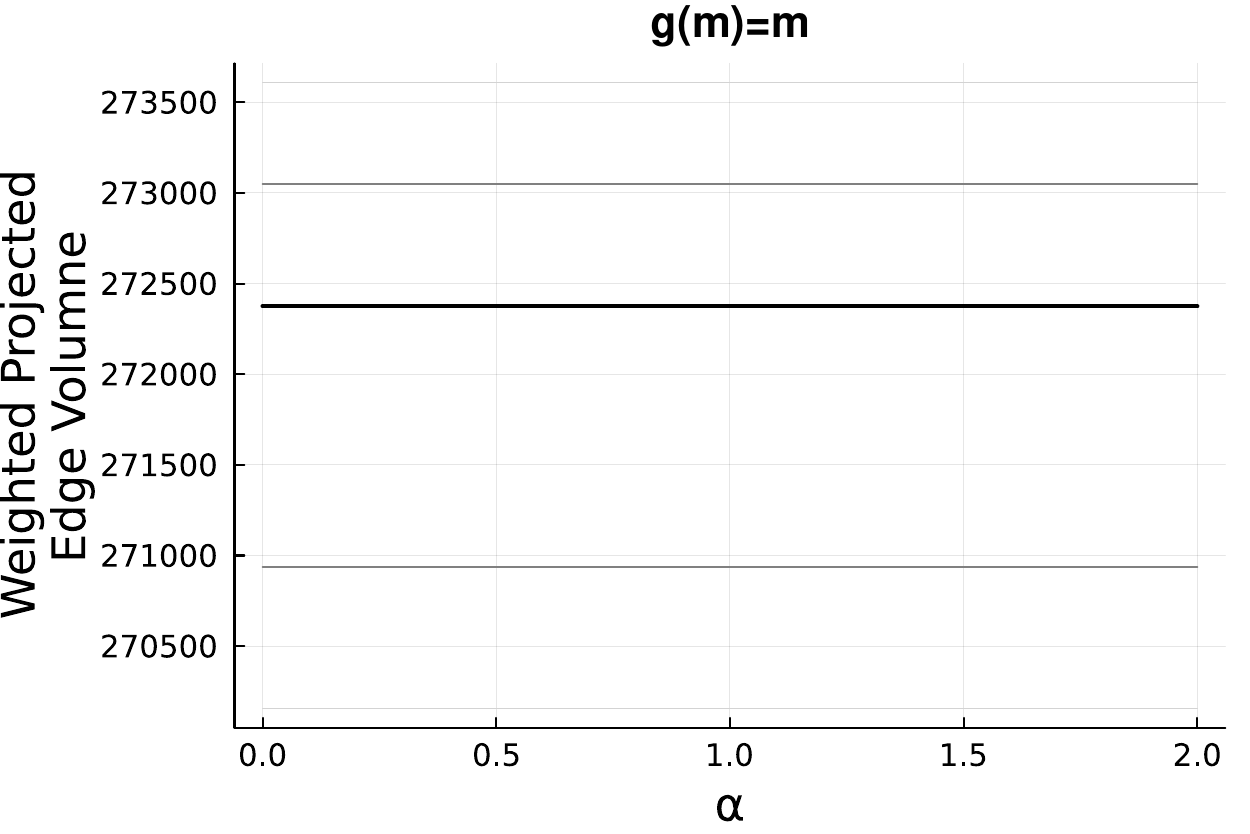}
        \label{fig:projected-edge-volume-linear}
    \end{subfigure}
    \begin{subfigure}[b]{0.32\textwidth}
        \centering
        \includegraphics[width=\textwidth]{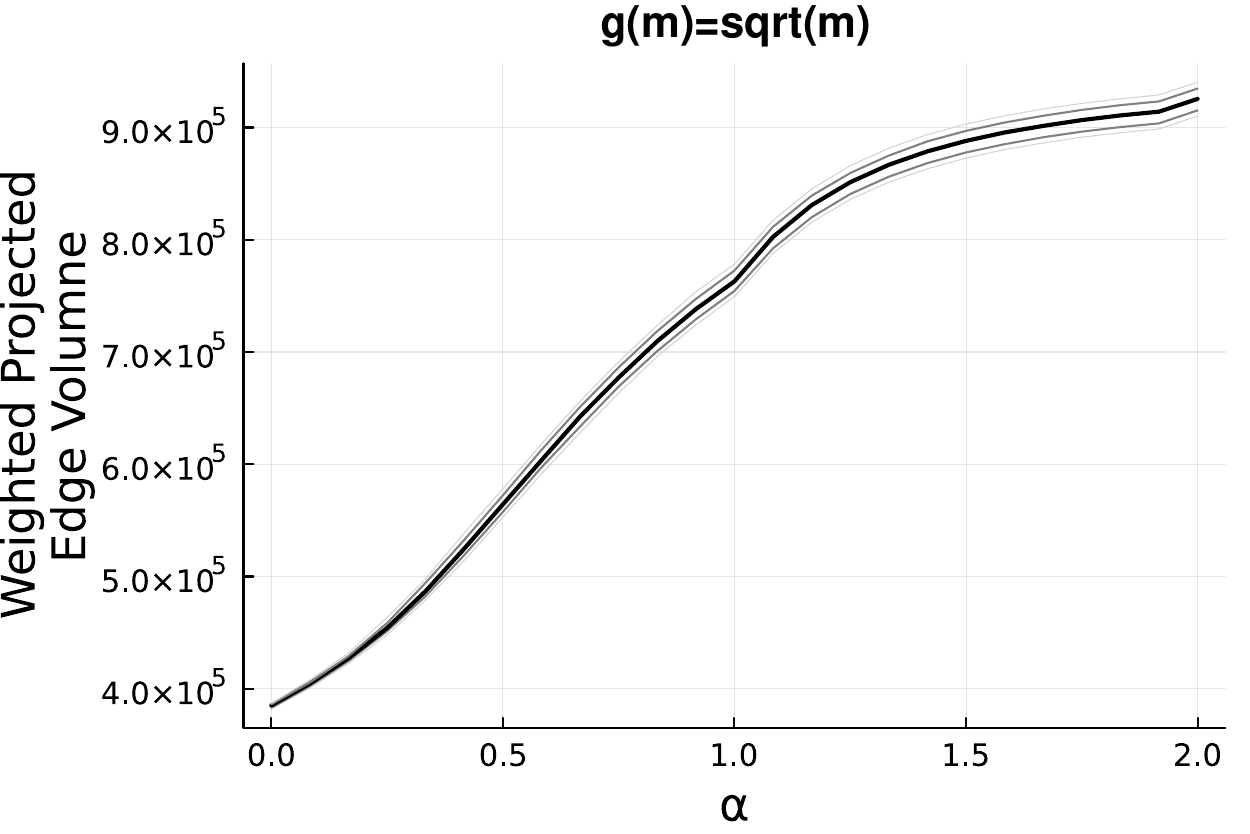}
        \label{fig:projected-edge-volume-sqrt}
    \end{subfigure}
    \begin{subfigure}[b]{0.32\textwidth}
        \centering
        \includegraphics[width=\textwidth]{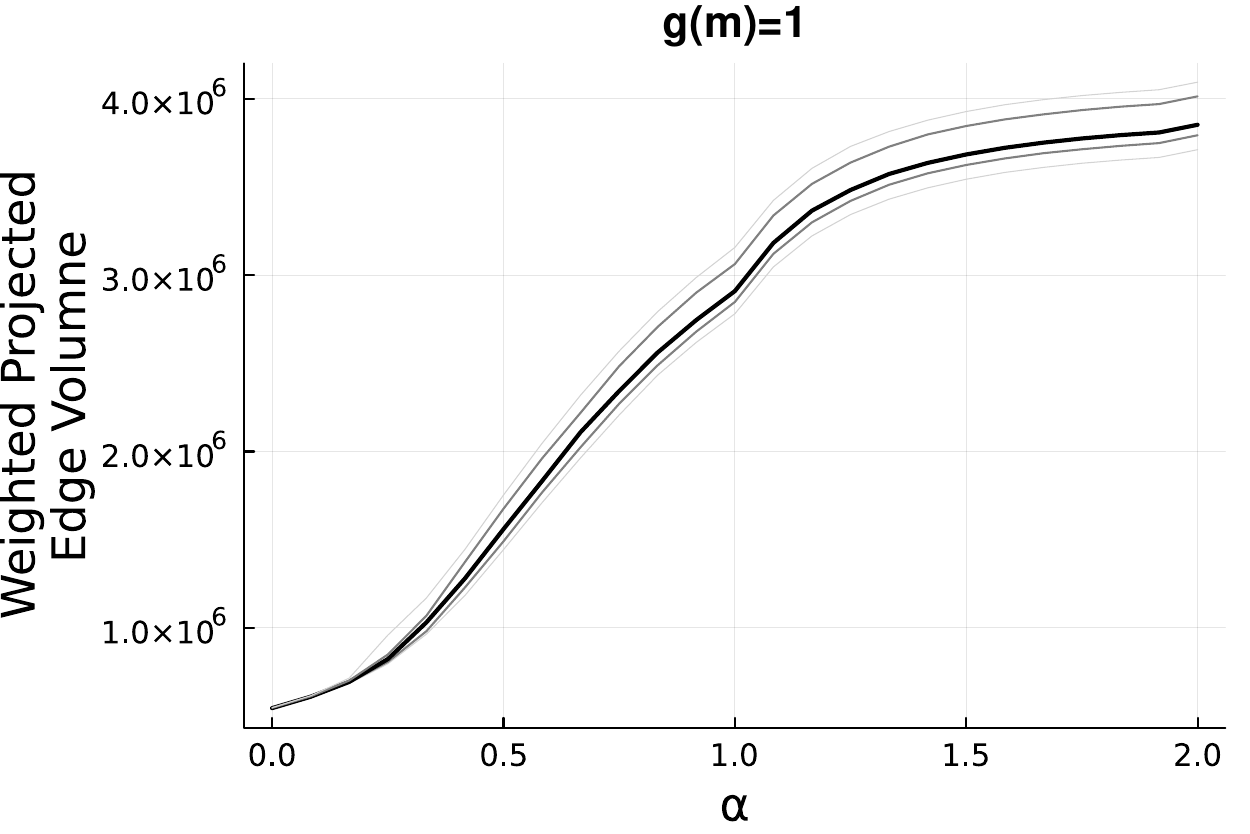}
        \label{fig:projected-edge-volume-pairwise}
    \end{subfigure}
    \caption{The projected edge volume of the weighted clique expansion for our spatial hypergraph model when weighted using $g(m)$. A single hyperedge of size $m$ is projected to a clique with pairwise edge weights $1/g(m)$. Gray and black lines indicate the 10th, 50th, and 90th percentiles.
    This quantity is constant for $g(m)=m$ while increasing for other choices of $g(m)$.}
    \label{fig:projected-edge-volume}
\end{figure}

\subsection{Changes in hyperedge transmissions by size below the pairwise threshold}
Lastly, we examine transmissions by hyperedges of a given size in the steady state. What we call a transmission is when node $i$ infects node $j$ over a hyperedge. 
We focus on the case where there is no ventilation $g(m)=1$.
Recall that this is not a pairwise epidemic as a pair of nodes can interact in multiple hyperedges.
We choose $\beta=10^{-3}$, which is below the threshold in the pairwise case ($\alpha = 0$), and causes a large steady state epidemic in the higher-order case $(\alpha \to 2$). This will let us observe the impact of hyperedge transmission as total infections dramatically increases.
The data are shown in Figure~\ref{fig:transmissions}.
For each epidemic trajectory, we compute the average transmissions by hyperedges of a given size over the last 1000 time steps (similar to Figure~\ref{fig:total-infections-explaination}). 
We then normalize this information to obtain a probability distribution for each value of $\alpha$ that represents the probability that a transmission was passed by a hyperedge of a given size. 
We find that initially, traditional pairwise edges are responsible for most transmissions but as the total number of infected nodes increases there is a transition to larger hyperedges transmitting infections.
For $0.3<\alpha<0.6$, transmissions primarily occur due to large hyperedges or small hyperedges. 
However, as we move to larger values of $1 \le \alpha \le 2$, medium-sized hyperedges become responsible for most transmissions.

In this case, we again see the utility of our model. Note the two regimes: for small $\alpha$, which cause a transition to an epidemic, we see that \emph{large hyperedges} matter more than medium size hyperedges. Whereas in a strong, steady state epidemic at large $\alpha$, then its the medium size hyperedges that matter. 

\begin{figure}[tp]
    \centering
    \begin{subfigure}[b]{0.49\textwidth}
        \centering
        \includegraphics[width=\textwidth]{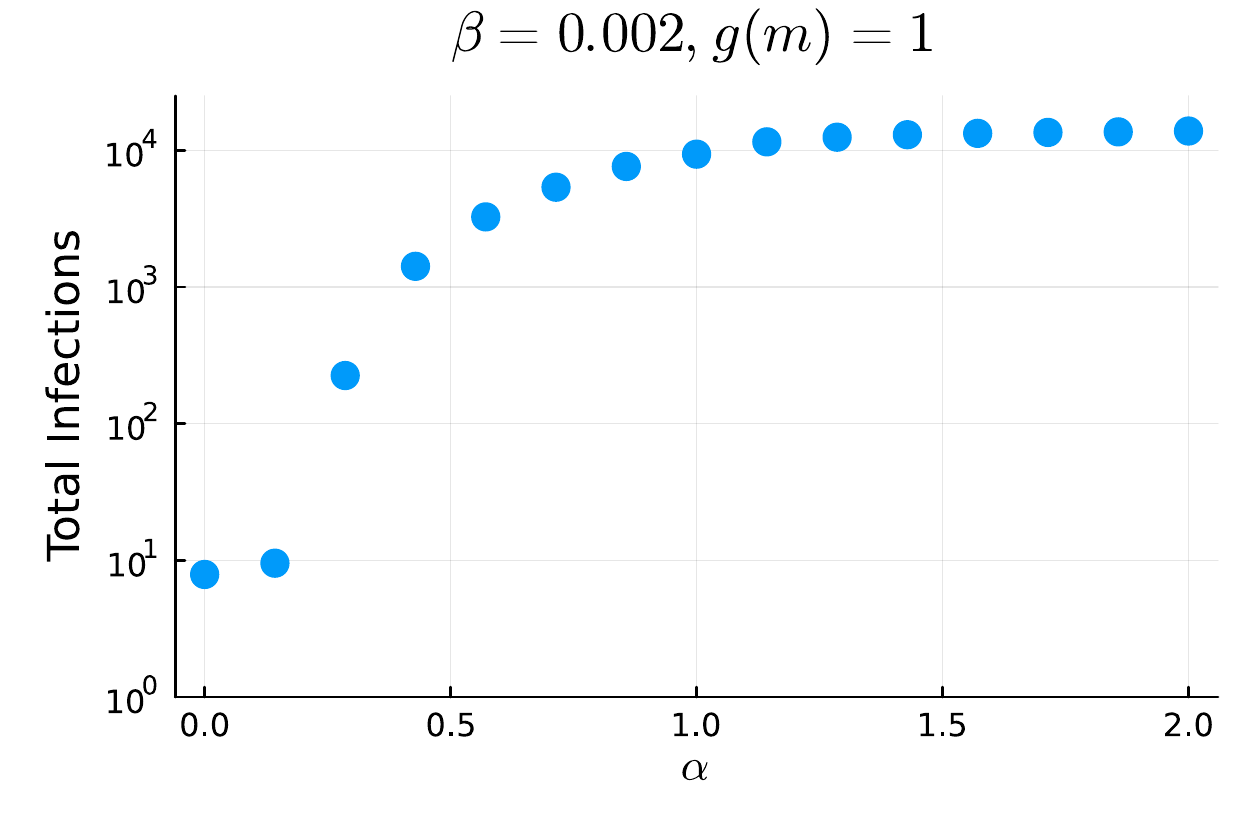}
        \label{fig:average-total-infections-2e-3}
    \end{subfigure}
    \begin{subfigure}[b]{0.49\textwidth}
        \centering
        \includegraphics[width=\textwidth]{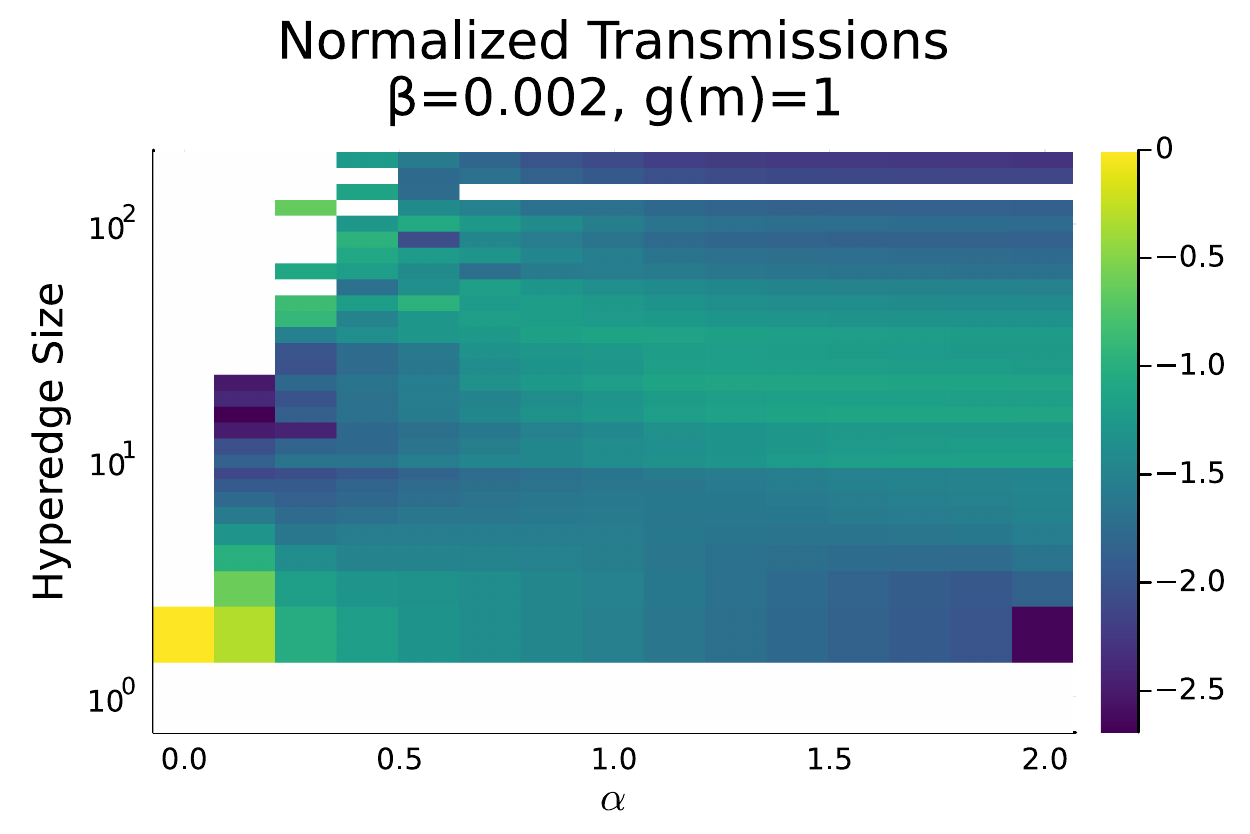}
        \label{fig:transmissions-2e-3}
    \end{subfigure}
    \caption{In a scenario without ventilation effects, higher order structure comes to dominant transmission. 
    (Left) As we increase $\alpha$, the underlying higher-order structure increases causing a dramatic increase in average trailing infected nodes in the $g(m) = 1$ case to model no ventilation. 
    (Right) We collect statistics on which hyperedge size was responsible for transmissions of an infection. We display the probability that a transmission occurred with a hyperedge of a given size in the steady state. At $\alpha = 0$, all infections are transmitted by pairwise edges because these are all that exist. For $\alpha$ a little bit larger (0.3 to 0.6), the larger hyperedges around $10^2$ dominate transmission right around the threshold. Then for $\alpha$ larger than around 1.0, we see a transition to medium size hyperedges dominating. This persists until $\alpha = 2$.}
    \label{fig:transmissions}
\end{figure}

\section{Discussion}
Pairwise interactions have enabled a large body of research with applications ranging from power grid robustness~\cite{amani2021power} to heterogeneity and disease spread~\cite{newman2002spread, pastor2015epidemic, moreno2002epidemic} to accelerating materials and molecule discovery~\cite{yang2024molecule}. 
The non-trivial effects of higher-order structure coupled with competing generalizations for intuitive pairwise concepts have posed a significant challenge for researchers with a grounded understanding in pairwise data~\cite{Benson-2016-motif-spectral,benson2021higherordernetworkanalysistakes}.
This manifests in a number of ways including inconclusive and contradictory findings depending on exactly how the higher-order problem is realized. We saw a key illustration of this in terms of the clustering coefficients (Section~\ref{sec:clustering}), where we investigated and saw a number of distinct properties of various generalizations of clustering coefficients to hypergraphs~\cite{robins2004small,yin2018higher,ha2024clustering,miyashita2024clustering}. As an example beyond what we have looked at, consider the impact of representation in the behavior of higher-order synchronization~\cite{zhang2023higher}. In this case, the choice of data representation (simplex versus hypergraph) can alter the behavior of synchronization.

The utility of our model is that we can isolate effects to the impact of the higher-order pieces from the rest of the graph model. We are optimistic that this will be broadly useful in the future to help understand the origin of ambiguities in the behavior of higher-order network models in the future.

There are many possible extensions of this model. For instance, we could consider extensions to \emph{gravity-like} model wherein the link or move around based on gravity-like terms with their neighbors~\cite{cabanastirapu2023humanmobilitydescribedclosedform,eldaghar2023multi}. In terms of epidemics, pairwise epidemics behave differently in the presence of interventions~\cite{eldaghar2023multi}, extending these types of interventions to hypergraphs would shed further insights into the role of higher-order spreading in more realistic scenarios. 
One avenue of future work we plan to investigate is \emph{fitting} our model to a dataset. In this case, we anticipate that a combination of modern vertex embedding techniques~\cite{Grover-2016-node2vec,Perozzi_2014,Tang_2015} will yield interesting models that are broadly similar to input networks.

\section*{Acknowledgments}
Eldaghar, Zhu, and Gleich all acknowledge funding and support from DOE award DE-SC0023162. Gleich is also partially supported by NSF IIS-2007481 and IARPA AGILE.

\newpage
\bibliographystyle{plain}
\bibliography{refs}

\begin{thebibliography}{10}

\bibitem{amani2021power}
Ali~Moradi Amani and Mahdi Jalili.
\newblock Power grids as complex networks: Resilience and reliability analysis.
\newblock {\em Ieee Access}, 9:119010--119031, 2021.

\bibitem{ASHRAE2019}
{ASHRAE}.
\newblock {\em Ventilation for Acceptable Indoor Air Quality}.
\newblock {ASHRAE}, {ANSI/ASHRAE} standard 62.1–2019 edition, 2019.
\newblock ANSI/ASHRAE 2019.

\bibitem{bansal2007individual}
Shweta Bansal, Bryan~T Grenfell, and Lauren~Ancel Meyers.
\newblock When individual behaviour matters: homogeneous and network models in epidemiology.
\newblock {\em Journal of the Royal Society Interface}, 4(16):879--891, 2007.

\bibitem{barthelemy2022class}
Marc Barthelemy.
\newblock Class of models for random hypergraphs.
\newblock {\em Physical Review E}, 106(6):064310, 2022.

\bibitem{barthelemy2022spatial}
Marc Barthelemy.
\newblock {\em Spatial Networks: A Complete Introduction: From Graph Theory and Statistical Physics to Real-World Applications}.
\newblock Springer Nature, 2022.

\bibitem{Benson-2016-motif-spectral}
Austin Benson, David~F. Gleich, and Jure Leskovec.
\newblock Higher-order organization of complex networks.
\newblock {\em Science}, 353(6295):163--166, 2016.

\bibitem{benson2021higherordernetworkanalysistakes}
Austin~R. Benson, David~F. Gleich, and Desmond~J. Higham.
\newblock Higher-order network analysis takes off, fueled by classical ideas and new data, 2021.

\bibitem{bobrowski2018topology}
Omer Bobrowski and Matthew Kahle.
\newblock Topology of random geometric complexes: a survey.
\newblock {\em Journal of applied and Computational Topology}, 1:331--364, 2018.

\bibitem{bodo2016sis}
{\'A}gnes Bod{\'o}, Gyula~Y Katona, and P{\'e}ter~L Simon.
\newblock Sis epidemic propagation on hypergraphs.
\newblock {\em Bulletin of mathematical biology}, 78:713--735, 2016.

\bibitem{bonato2014dimensionality}
Anthony Bonato, David~F Gleich, Myunghwan Kim, Dieter Mitsche, Pawe{\l} Pra{\l}at, Yanhua Tian, and Stephen~J Young.
\newblock Dimensionality of social networks using motifs and eigenvalues.
\newblock {\em PloS one}, 9(9):e106052, 2014.

\bibitem{bonato2012geometric}
Anthony Bonato, Jeannette Janssen, and Pawe{\l} Pra{\l}at.
\newblock Geometric protean graphs.
\newblock {\em Internet Math.}, 8(1-2):2--28, 2012.

\bibitem{cabanastirapu2023humanmobilitydescribedclosedform}
Oriol Cabanas-Tirapu, Lluís Danús, Esteban Moro, Marta Sales-Pardo, and Roger Guimerà.
\newblock Human mobility is well described by closed-form gravity-like models learned automatically from data, 2023.

\bibitem{castellano2010thresholds}
Claudio Castellano and Romualdo Pastor-Satorras.
\newblock Thresholds for epidemic spreading in networks.
\newblock {\em Physical review letters}, 105(21):218701, 2010.

\bibitem{Chakrabarti2008}
Deepayan Chakrabarti, Yang Wang, Chenxi Wang, Jurij Leskovec, and Christos Faloutsos.
\newblock Epidemic thresholds in real networks.
\newblock {\em ACM Trans. Info. Sys. Secur.}, 10(4):1–26, January 2008.

\bibitem{chang2021mobility}
Serina Chang, Emma Pierson, Pang~Wei Koh, Jaline Gerardin, Beth Redbird, David Grusky, and Jure Leskovec.
\newblock Mobility network models of covid-19 explain inequities and inform reopening.
\newblock {\em Nature}, 589(7840):82--87, 2021.

\bibitem{de2023connectivity}
Henry-Louis De~Kergorlay and Desmond~J Higham.
\newblock Connectivity of random geometric hypergraphs.
\newblock {\em Entropy}, 25(11):1555, 2023.

\bibitem{diaz2007dynamicrandomgeometricgraphs}
Josep Diaz, Dieter Mitsche, and Xavier Perez.
\newblock Dynamic random geometric graphs, 2007.

\bibitem{Dixit2023}
Avinash~K. Dixit, Baltazar Espinoza, Zirou Qiu, Anil Vullikanti, and Madhav~V. Marathe.
\newblock Airborne disease transmission during indoor gatherings over multiple time scales: Modeling framework and policy implications.
\newblock {\em Proc. Natl. Acad. Sci.}, 120(16), April 2023.

\bibitem{eldaghar2023multi}
Omar Eldaghar, Michael~W Mahoney, and David~F Gleich.
\newblock Multi-scale local network structure critically impacts epidemic spread and interventions.
\newblock {\em arXiv:2312.17351}, 2023.

\bibitem{Eldaghar-2024-spatial-hypergraphs}
Omar Eldaghar, Yu~Zhu, and David~F. Gleich.
\newblock A spatial hypergraph model where epidemic spread demonstrates clear higher-order effects.
\newblock In {\em Proceedings of Complex Networks and their Applications}, 2024.

\bibitem{er1959}
P.~Erd\H{o}s and A.~R\'{e}nyi.
\newblock On random graphs {I}.
\newblock {\em Publicationes Mathematicae Debrecen}, 6:290--297, 1959.

\bibitem{er1960}
P.~Erd\H{o}s and A.~R\'{e}nyi.
\newblock On the evolution of random graphs.
\newblock {\em Publication of the Mathematical Institute of the Hungarian Academy of Sciences}, 5:17--61, 1960.

\bibitem{er1961}
P.~Erd\H{o}s and A.~R\'{e}nyi.
\newblock On the strength of connectedness of a random graph.
\newblock {\em Acta Mathematica Hungarica}, 12:261--267, 1961.

\bibitem{ester1996density}
Martin Ester, Hans-Peter Kriegel, J{\"o}rg Sander, Xiaowei Xu, et~al.
\newblock A density-based algorithm for discovering clusters in large spatial databases with noise.
\newblock In {\em KDD}, pages 226--231, 1996.

\bibitem{Fagiolo2007}
Giorgio Fagiolo.
\newblock Clustering in complex directed networks.
\newblock {\em Physical Review E}, 76(2), August 2007.

\bibitem{gilbert1959random}
Edgar~N Gilbert.
\newblock Random graphs.
\newblock {\em The Annals of Mathematical Statistics}, 30(4):1141--1144, 1959.

\bibitem{grossmann2021heterogeneity}
Gerrit Gro{\ss}mann, Michael Backenk{\"o}hler, and Verena Wolf.
\newblock Heterogeneity matters: Contact structure and individual variation shape epidemic dynamics.
\newblock {\em Plos one}, 16(7):e0250050, 2021.

\bibitem{Grover-2016-node2vec}
Aditya Grover and Jure Leskovec.
\newblock Node2vec: Scalable feature learning for networks.
\newblock In {\em Proceedings of the 22Nd ACM SIGKDD International Conference on Knowledge Discovery and Data Mining}, KDD '16, pages 855--864, New York, NY, USA, 2016. ACM.

\bibitem{ha2024clustering}
Gyeong-Gyun Ha, Izaak Neri, and Alessia Annibale.
\newblock Clustering coefficients for networks with higher order interactions.
\newblock {\em Chaos: An Interdisciplinary Journal of Nonlinear Science}, 34(4), 2024.

\bibitem{Hagberg2008}
Aric~A. Hagberg, Daniel~A. Schult, and Pieter~J. Swart.
\newblock Exploring network structure, dynamics, and function using networkx.
\newblock In {\em Proceedings of the 7th Python in Science Conference}, SciPy, page 11–15. SciPy, June 2008.

\bibitem{higham2021epidemics}
Desmond~J Higham and Henry-Louis De~Kergorlay.
\newblock Epidemics on hypergraphs: Spectral thresholds for extinction.
\newblock {\em Proceedings of the Royal Society A}, 477(2252):20210232, 2021.

\bibitem{iacopini2019simplicial}
Iacopo Iacopini, Giovanni Petri, Alain Barrat, and Vito Latora.
\newblock Simplicial models of social contagion.
\newblock {\em Nature communications}, 10(1):2485, 2019.

\bibitem{kahle2009topology}
Matthew Kahle.
\newblock Topology of random clique complexes.
\newblock {\em Discrete mathematics}, 309(6):1658--1671, 2009.

\bibitem{Kahle2014}
Matthew Kahle.
\newblock Topology of random simplicial complexes: a survey, 2014.

\bibitem{kleinberg2000small}
Jon Kleinberg.
\newblock The small-world phenomenon: An algorithmic perspective.
\newblock In {\em Proceedings of the thirty-second annual ACM symposium on Theory of computing}, pages 163--170, 2000.

\bibitem{kuramoto1975self}
Yoshiki Kuramoto.
\newblock Self-entrainment of a population of coupled non-linear oscillators.
\newblock In {\em International symposium on mathematical problems in theoretical physics: January 23--29, 1975, kyoto university, kyoto/Japan}, pages 420--422. Springer, 1975.

\bibitem{lancichinetti2008benchmark}
Andrea Lancichinetti, Santo Fortunato, and Filippo Radicchi.
\newblock Benchmark graphs for testing community detection algorithms.
\newblock {\em Physical review E}, 78(4):046110, 2008.

\bibitem{landry2020effect}
Nicholas~W Landry and Juan~G Restrepo.
\newblock The effect of heterogeneity on hypergraph contagion models.
\newblock {\em Chaos: An Interdisciplinary Journal of Nonlinear Science}, 30(10), 2020.

\bibitem{Li-Hypergraph-Pagerank}
Pan Li, Niao He, and Olgica Milenkovic.
\newblock Quadratic decomposable submodular function minimization: Theory and practice.
\newblock {\em Journal of Machine Learning Research}, 21(106):1--49, 2020.

\bibitem{li2021contagion}
Zhaoqing Li, Zhenghong Deng, Zhen Han, Karin Alfaro-Bittner, Baruch Barzel, and Stefano Boccaletti.
\newblock Contagion in simplicial complexes.
\newblock {\em Chaos, Solitons \& Fractals}, 152:111307, 2021.

\bibitem{Liu-2021-localhyper}
Meng Liu, Nate Veldt, Haoyu Song, Pan Li, and David~F. Gleich.
\newblock Strongly local hypergraph diffusions for clustering and semi-supervised learning.
\newblock In {\em Proceedings of the Web Conference 2021}, WWW '21, pages 2092--2103, 2021.

\bibitem{Lu2020}
Jianyun Lu, Jieni Gu, Kuibiao Li, Conghui Xu, Wenzhe Su, Zhisheng Lai, Deqian Zhou, Chao Yu, Bin Xu, and Zhicong Yang.
\newblock Covid-19 outbreak associated with air conditioning in restaurant, guangzhou, china, 2020.
\newblock {\em Emerging Infectious Diseases}, 26(7):1628–1631, July 2020.

\bibitem{lunagomez2017geometric}
Sim{\'o}n Lunag{\'o}mez, Sayan Mukherjee, Robert~L Wolpert, and Edoardo~M Airoldi.
\newblock Geometric representations of random hypergraphs.
\newblock {\em Journal of the American Statistical Association}, 112(517):363--383, 2017.

\bibitem{miyashita2024clustering}
Rikuya Miyashita, Shiori Hironaka, and Kazuyuki Shudo.
\newblock Clustering coefficient reflecting pairwise relationships within hyperedges.
\newblock {\em arXiv preprint arXiv:2410.23799}, 2024.

\bibitem{moreno2002epidemic}
Yamir Moreno, Romualdo Pastor-Satorras, and Alessandro Vespignani.
\newblock Epidemic outbreaks in complex heterogeneous networks.
\newblock {\em The European Physical Journal B-Condensed Matter and Complex Systems}, 26:521--529, 2002.

\bibitem{newman2002spread}
Mark~EJ Newman.
\newblock Spread of epidemic disease on networks.
\newblock {\em Physical review E}, 66(1):016128, 2002.

\bibitem{pastor2015epidemic}
Romualdo Pastor-Satorras, Claudio Castellano, Piet Van~Mieghem, and Alessandro Vespignani.
\newblock Epidemic processes in complex networks.
\newblock {\em Reviews of modern physics}, 87(3):925--979, 2015.

\bibitem{Perozzi_2014}
Bryan Perozzi, Rami Al-Rfou, and Steven Skiena.
\newblock Deepwalk: online learning of social representations.
\newblock In {\em Proceedings of the 20th ACM SIGKDD international conference on Knowledge discovery and data mining}, KDD ’14, page 701–710. ACM, August 2014.

\bibitem{Prakash2011}
B.~Aditya Prakash, Deepayan Chakrabarti, Michalis Faloutsos, Nicholas Valler, and Christos Faloutsos.
\newblock Threshold conditions for arbitrary cascade models on arbitrary networks.
\newblock In {\em 2011 IEEE 11th ICDM}, volume~14, page 537–546. IEEE, December 2011.

\bibitem{robins2004small}
Garry Robins and Malcolm Alexander.
\newblock Small worlds among interlocking directors: Network structure and distance in bipartite graphs.
\newblock {\em Computational \& Mathematical Organization Theory}, 10:69--94, 2004.

\bibitem{Strogatz2000}
Steven~H. Strogatz.
\newblock From kuramoto to crawford: exploring the onset of synchronization in populations of coupled oscillators.
\newblock {\em Physica D: Nonlinear Phenomena}, 143(1–4):1–20, September 2000.

\bibitem{suo2018information}
Qi~Suo, Jin-Li Guo, and Ai-Zhong Shen.
\newblock Information spreading dynamics in hypernetworks.
\newblock {\em Physica A: Statistical Mechanics and its Applications}, 495:475--487, 2018.

\bibitem{Takai2020}
Yuuki Takai, Atsushi Miyauchi, Masahiro Ikeda, and Yuichi Yoshida.
\newblock Hypergraph clustering based on pagerank.
\newblock In {\em Proceedings of the 26th ACM SIGKDD International Conference on Knowledge Discovery \& Data Mining}, KDD ’20, page 1970–1978. ACM, August 2020.

\bibitem{Tang_2015}
Jian Tang, Meng Qu, Mingzhe Wang, Ming Zhang, Jun Yan, and Qiaozhu Mei.
\newblock Line: Large-scale information network embedding.
\newblock In {\em Proceedings of the 24th International Conference on World Wide Web}, WWW ’15, page 1067–1077. International World Wide Web Conferences Steering Committee, May 2015.

\bibitem{turnbull2024latent}
Kathryn Turnbull, Sim{\'o}n Lunag{\'o}mez, Christopher Nemeth, and Edoardo Airoldi.
\newblock Latent space modeling of hypergraph data.
\newblock {\em Journal of the American Statistical Association}, 119(548):2634--2646, 2024.

\bibitem{volz2011effects}
Erik~M Volz, Joel~C Miller, Alison Galvani, and Lauren Ancel~Meyers.
\newblock Effects of heterogeneous and clustered contact patterns on infectious disease dynamics.
\newblock {\em PLoS computational biology}, 7(6):e1002042, 2011.

\bibitem{watts1998collective}
Duncan~J Watts and Steven~H Strogatz.
\newblock Collective dynamics of ‘small-world’networks.
\newblock {\em nature}, 393(6684):440--442, 1998.

\bibitem{yang2024molecule}
Nianzu Yang, Huaijin Wu, Kaipeng Zeng, Yang Li, Siyuan Bao, and Junchi Yan.
\newblock Molecule generation for drug design: a graph learning perspective.
\newblock {\em Fundamental Research}, 2024.

\bibitem{yin2018higher}
Hao Yin, Austin~R Benson, and Jure Leskovec.
\newblock Higher-order clustering in networks.
\newblock {\em Physical Review E}, 97(5):052306, 2018.

\bibitem{zhang2023higher}
Yuanzhao Zhang, Maxime Lucas, and Federico Battiston.
\newblock Higher-order interactions shape collective dynamics differently in hypergraphs and simplicial complexes.
\newblock {\em Nature communications}, 14(1):1605, 2023.

\bibitem{Zhu-preprint}
Yu~Zhu and David~F. Gleich.
\newblock A survey on random hypergraph models with connections to bipartite graphs and simplicial complexes, 2025.
\newblock In preparation.

\end{thebibliography}




    

\end{document}